\DeclareMathOperator*{\argmax}{arg\,max}
\DeclareMathOperator*{\argmin}{arg\,min}
\newtheorem{definition}{\textbf{Definition}}
\newtheorem{lemma}{\textbf{Lemma}}
\newtheorem{theorem}{\textbf{Theorem}}
\newtheorem{proposition}{\textbf{Proposition}}
\newtheorem{remark}{\textbf{Remark}}
\newtheorem{example}{\textbf{Example}}
\newcommand{\nn}{\nonumber}
\newcommand\redsout{\bgroup\markoverwith{\textcolor{red}{\rule[0.5ex]{2pt}{0.8pt}}}\ULon}
\newcommand{\zsfd}[1]{\ifmmode\text{\redsout{\ensuremath{#1}}}\else\redsout{#1}\fi}
\begin{document}

\title{Linear-Complexity Exponentially-Consistent Tests for Universal Outlying Sequence Detection}
\author{Yuheng Bu\qquad
Shaofeng Zou\qquad
Venugopal V. Veeravalli \\
 University of Illinois at Urbana-Champaign\\
Email: bu3@illinois.edu, szou3@illinois.edu,  vvv@illinois.edu
}



%
%


\maketitle

\begin{abstract}
The problem of universal outlying sequence detection is studied, where the goal is to detect outlying sequences among $M$ sequences of samples.
A sequence is considered as outlying if the observations therein are generated by a distribution different from those generating the observations in the majority of the sequences.
In the universal setting, we are interested in identifying all the outlying sequences without knowing the underlying generating distributions.
In this paper,  a class of tests based on distribution clustering is proposed. These tests are shown to be exponentially consistent with \textit{linear} time complexity in $M$.
Numerical results demonstrate that our clustering-based tests achieve similar performance to existing tests, while being considerably more computationally efficient.

\end{abstract}

\section{Introduction}

\let\thefootnote\relax\footnotetext{This work was presented in part at the International Symposium on Information Theory (ISIT), Aachen, Germany, 2017 \cite{bu2017isit}.}
\footnotetext{The work of Y. Bu, S. Zou and V. V. Veeravalli was supported by the National Science Foundation under grants NSF 11-11342 and 1617789, through the University of Illinois at Urbana-Champaign.}

We study a universal outlying sequence detection problem, where the objective is to detect outlying sequences among $M$ sequences of samples. Each sequence consists of $n$ independent and identically distributed (i.i.d.) discrete observations.
It is assumed that the observations in the majority of the sequences are distributed according to typical distributions. A sequence is considered as outlying if its distribution is different from the typical distributions.
We are interested in the universal setting of the problem, where we do not know the probability mass functions (pmfs) of both the typical and outlying distributions, which are assumed to have full support over a finite alphabet. The goal is to design a universal test, which does not depend on the typical and outlying distributions, to discern all the outlying sequences efficiently.


Outlying sequence detection finds possible applications in many domains \cite{vvv2014}. For example, in cognitive wireless networks, channel measurements follow different distributions depending on whether the channel is busy or vacant. In order to utilize the vacant channels for improving spectral efficiency, such a network need to efficiently identify vacant channels out of a large number busy channels based on their corresponding signals. Other applications include anomaly detection in large data sets \cite{bolton2002statistical}, security monitoring in sensor networks \cite{chamberland2007wireless}, and detecting an epidemic disease with aberrant genetic markers \cite{vucetic2003detection}. All of these applications require a reliable algorithm that can be implemented with low time complexity.



In universal outlying sequence detection problem, we have no prior knowledge and no training data to learn these distributions before hand. Thus, the major challenges to solve this problem lie in: (1) building distribution-free consistent tests, and further guaranteeing their exponential consistency for distinct typical and outlying distributions; and (2) designing low-complexity tests that can be used in practical applications.

To overcome theses challenges, we propose tests based on distribution clustering \cite{banerjee2005clustering} for various scenarios.
We show that our tests are exponentially consistent,  with time complexity that is linear in the number of sequences $M$ and independent of the number of outlying sequences $T$.

The intuition for the clustering-based test is that if we observe a sequence of samples from each distribution, the empirical distributions of the sequences will converge to the true distributions as the number of samples goes to infinity. Moreover, we assume that the typical distributions (and also possibly the outlying distributions) form a cluster, which means they are usually closer to each other than to the outlying distributions. This suggests that the outlying sequence detection problem can be solved by clustering the empirical distributions using KL divergence as the distance metric (see also \cite{li2016outlying}).



We note that our clustering-based tests are closely related to the classical distribution clustering problem \cite{banerjee2005clustering,chaudhuri2008finding,ackermann2010clustering,nock2008mixed,xiong2011group}, but there are essential differences. In the distribution clustering problem, the goal is to construct low-complexity algorithm to find the cluster structure of distributions with the lowest cost (sum of distance functions of each distribution in the cluster to the center). Whereas in our problem, we are given samples from each distribution rather than the actual underlying distribution itself. Since we are considering a detection problem, we are more interested in the error probability of our test, rather than the minimal clustering cost. Previous studies on approximation algorithms for distribution clustering \cite{chaudhuri2008finding,ackermann2010clustering,nock2008mixed} only show that by carefully seeding the initialization step, the cost corresponding to the cluster structure returned by the approximation can be bounded within a $\log K$ factor of the minimal cost, where $K$ is the number of clusters. And there are no results showing that the approximation algorithms will converge to the minimal cost. Therefore, their results cannot be directly applied to our problem to provide statistical performance guarantees.

The problem of outlying sequence detection when all the typical distributions are identical was also studied previously as a universal outlier hypothesis testing problem for discrete samples in \cite{li2014universal} and for continuous samples in \cite{bu2016universal,zou2017anamoly}. In \cite{li2014universal}, the exponential consistency of the generalized likelihood (GL) test under various universal settings was established, where the GL test is based on computing the generalized likelihood function for each hypothesis by taking a maximum likelihood approach with respect to the unknown distributions. However, GL test cannot handle the scenario where the typical distributions are possibly different from each other, and so are the outlying distributions. Another major drawback of the GL test is its high time complexity, which is exponential in $M$ and $T$ (actually $M^T$ or $2^M$ depending on assumptions).

Our contributions in this paper are summarized as follows. We construct clustering-based tests that are exponentially consistent and have time complexity that is linear in $M$ for various scenarios. For the scenario where GL test is asymptotically optimal, we show that running more steps of the clustering-based test will not decrease the error exponent. We also show that the clustering-based tests are applicable to more general scenarios; for example, when both the typical and outlying distributions form clusters, the clustering-based test is exponentially consistent, but the GL test is not even applicable. We provide numerical results to demonstrate that the clustering-based tests can achieve an error exponent similar to that of the optimal test, but with time complexity that is linear in $M$. For all scenarios, our experiments indicate that running more steps of the clustering-based test results in a larger error exponent.

The rest of the paper is organized as follows. In Section \ref{sec:model}, we describe the problem model and three different test scenarios. In Section \ref{sec:GL}, we introduce the GL test studied in \cite{li2014universal}, which motivates the connection between universal outlying sequence detection and distribution clustering. In Section \ref{sec:cluster}, we reformulate the outlying sequence detection problem as a distribution clustering problem. In Section \ref{sec:new}, we propose linear-complexity tests based on the K-means clustering algorithm. In Section \ref{sec:num}, we provide numerical results. Finally in Section \ref{sec:con}, we conclude the paper.

\section{Problem Model}\label{sec:model}

%

Throughout the paper, all random variables are denoted by capital letters, and their realizations are denoted by the corresponding lower-case letters. All distributions are defined on the finite set $\mathcal{Y}$, and $\mathcal{P(Y)}$ denotes the set of all probability mass functions on $\mathcal{Y}$.

Consider an outlying sequence detection problem (see Fig.~\ref{fig:model}), where there are in total $M \ge 3$ data sequences denoted by $Y^{(i)}$ for $i=1,\dots, M$. Each data sequence $Y^{(i)}$ consists of $n$ i.i.d. samples $Y^{(i)}_1, \dots , Y^{(i)}_n$. The majority of the sequences are distributed according to typical distributions except for a subset $S$ of outlying sequences, where $S \subset \{1,\dots, M\}$ and $1\le |S| < \frac{M}{2}$.  Each typical sequence $j$ is distributed according to a typical distribution $\pi_j \in \mathcal{P(Y)},\ j\in S^C$. Each outlying sequence $i$ is distributed according to an outlying distribution $ \mu_i \in \mathcal{P(Y)},\ i \in S$. Nothing is known about $\mu_i$ and $\pi_j$ except that $\mu_i \ne \pi_j,\ \forall i\in S,\ j\in S^C,\ S \subset \{1,\dots, M\}$, and all of them have full support over $\mathcal{Y}$.
Denote $\mathcal{S}$ as the set comprising all possible outlying subsets.

\begin{figure}[!h]
	\centering
	\includegraphics[width=5.6cm]{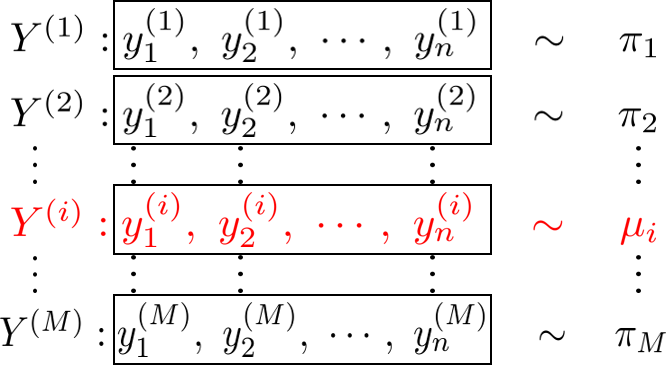}\\
	\caption{Outlying sequence detection with data sequences generated by typical
		distributions denoted by $\pi$ and outlying distributions denoted by $\mu$.}\label{fig:model}
\end{figure}

We use the notation $\bm{y}^{(i)} = (y^{(i)}_1, \dots , y^{(i)}_n)$, where $y^{(i)}_k \in \mathcal{Y}$ is the $k$-th observation of the $i$-th sequence.  Let $\gamma_i$ denote the empirical distribution of $\bm{y}^{(i)}$, and is defined as $$\gamma_i(y) \triangleq \frac{1}{n} \Big| \{k=1,\dots,n:y_k =y \} \Big|,$$ for each $y \in \mathcal{Y}$.

%

For the hypothesis corresponding to an outlying subset $S \in \mathcal{S}$, the joint distribution of all the observations is given by
\begin{equation}\label{eq:likelihood}
 p_S(y^{Mn})= L_S\Big(y^{Mn},\{\mu_i\}_{i\in S},\{\pi_j\}_{j \in S^C}\Big)=\prod_{k=1}^n\Big\{\prod_{i \in S } \mu_i(y_k^{(i)}) \prod_{j\in S^C } \pi_j(y_k^{(j)}) \Big\},
\end{equation}
where $L_S\big(y^{Mn},\{\mu_i\}_{i\in S},\{\pi_j\}_{j \in S^C}\big)$ denotes the likelihood.


Our goal is to build distribution-free tests to detect the outlying sequences. The test can be captured by a universal rule $\delta : \mathcal{Y}^{Mn} \to \mathcal{S}$, which must not depend on $\{\mu_i\}_{i\in S}$ and $\{\pi_j\}_{j \in S^C}$.

The performance of a universal test is gauged by the maximal probability of error, which is defined as
\begin{equation*}
  e\big(\delta\big) \triangleq \max_{S\in \mathcal{S}}
  \sum_{y^{Mn}:\ \delta(y^{Mn})\ne S} p_S(y^{Mn}),
\end{equation*}
and the corresponding error exponent is defined as
\begin{equation*}
  \alpha\big(\delta\big) \triangleq \lim_{n\to \infty}-\frac{1}{n}\log e\big(\delta\big).
\end{equation*}
\begin{definition}
A universal test $\delta$ is  said to be {universally exponentially consistent} if
\begin{equation*}
\alpha\big(\delta\big) >0.
\end{equation*}
\end{definition}

In this paper, we use $D(\pi\|\mu)$ and $B(\pi, \mu)$ to denote the {KL divergence} and {Bhattacharyya distance} between distributions $\pi$ and $\mu$:
\begin{flalign*}
D(\pi\|\mu) &\triangleq \sum_{y \in \mathcal{Y}} \pi(y) \log\left( \frac{\pi(y)}{\mu(y)}\right),\\
B(\pi,\mu)&\triangleq-\log\left(\sum_{y \in \mathcal{Y}} \pi(y)^{\frac{1}{2}}\mu(y)^{\frac{1}{2}}\right).
\end{flalign*}
\subsection{Three Scenarios}
In this paper, we focus on the following three scenarios.

We first study the scenario that all typical distributions are identical, i.e., $\pi_j=\pi,\ \forall j \in S^C$, and the number of outlying sequences is known at the outset, in Section \ref{subsec:3.1} and Section \ref{subsec:5.1}.

We next consider the scenario with identical typical distribution ($\pi_j=\pi,\ \forall j \in S^C$) and identical outlying distribution ($\mu_i=\mu,\ \forall i \in S$), and the number of outlying sequences is unknown, in Section \ref{subsec:3.2} and Section \ref{subsec:5.2}.

We note that without any further assumptions, when the number of the outlying sequences is unknown and outlying sequences can be distinctly distributed, there cannot exist a universally exponentially consistent test \cite{li2014universal}.

Thus, we then investigate the scenario where both the outlying distributions $\{\mu_i\}_{i\in S}$ and the typical distributions $\{\pi_j\}_{j \in S^C}$ form clusters in Section \ref{subsec:5.3}. Moreover, the typical distributions and the outlying distributions are distinct. More concretely,
\begin{align}\label{eq:clustercondition}
  \max_{i,j \in S} D(\mu_i\|\mu_j)  &< \min_{i \in S, j\in S^C} \{D(\mu_i\|\pi_j),  D(\pi_j\|\mu_i)\},\nn \\
  \max_{i,j \in S^C} D(\pi_i\|\pi_j)&< \min_{i \in S, j\in S^C} \{D(\mu_i\|\pi_j),  D(\pi_j\|\mu_i)\}.
\end{align}
This condition means that the divergence between two distributions within the same cluster is less than the divergence between two distributions in different clusters.

\section{Generalized Likelihood Test}\label{sec:GL}
In this section, we introduce the GL test for outlying sequence detection studied in \cite{li2014universal}, and summarize its consistency results (see \cite{li2014universal} for details). Here, it is assumed that the typical distributions are identical, i.e., $\pi_j=\pi,\ \forall j\in S^C$.

In the universal setting with $\pi$ and $\{\mu_i\}_{i\in S}$ unknown, conditioned on the outlying set being $S \in \mathcal{S}$, we compute the generalized likelihood of $y^{Mn}$ by replacing $\pi$ and $\{\mu_i\}_{i\in S}$ in \eqref{eq:likelihood} with their maximum likelihood estimates (MLEs) $\{\hat{\mu}_i\}_{i\in S}$, and $\hat{\pi}_S$, as
\begin{equation}
  \hat{p}_S^{\mathrm{univ}} = \hat{L}_S(y^{Mn},\{\hat{\mu}_i\}_{i\in S},\hat{\pi}_S).
\end{equation}
The GL test \cite{li2014universal} then selects the hypothesis under which the GL is maximized (ties are broken arbitrarily), i.e.,
\begin{equation}\label{eq:GLtest}
  \delta_{\mathrm{GL}}(y^{Mn}) = \argmax_{S \in \mathcal{S}} \hat{p}_S^{\mathrm{univ}}.
\end{equation}

In the following subsections, we consider different settings, where the suitable set $\mathcal{S}$ and the MLE of $\{\hat{\mu}_i\}_{i\in S}$, and $\hat{\pi}_S$ may differ.

\subsection{Known Number of Outlying Sequences}\label{subsec:3.1}
We first consider the scenario in which the number of outlying sequences, denoted by $T \ge 1$, is known at the outset, i.e., $\mathcal{S}=\{S: S\subset\{ 1,\dots,M\},|S|=T\}$. Moreover, the distributions of different outlying sequences $\mu_i, i \in S$, can be distinct from each other.

We compute the generalized likelihood of $y^{Mn}$ by replacing the $\mu_i,\ i \in S$ and $\pi$ in \eqref{eq:likelihood} with their MLEs: $$\hat{\mu}_i = \gamma_i, \ \text{ and }\ \hat{\pi}_S = \frac{\sum_{j \in S^C} \gamma_j}{M-T}.$$ Then, as in \cite{li2014universal}, the GL test in \eqref{eq:GLtest} is equivalent to
\begin{align}\label{eq:GLtest1}
  \delta_{\mathrm{GL}}(y^{Mn}) = &\argmin_{S \subset \mathcal{S}}  \sum_{j\in S^C} D\left(\gamma_j\bigg\|\frac{ \sum_{j\in S^C}\gamma_j}{M-T} \right ).
\end{align}

\begin{proposition}\label{prop:GLtest1} \cite[Theorem 10]{li2014universal}
Consider the scenario when the number of outlying sequences is known. If the typical distributions are identical, then the GL test in \eqref{eq:GLtest1} is universally exponentially consistent. As $M \to \infty$, the achievable error exponent converges as
%
\begin{equation*}
  \lim_{M \to \infty} \alpha\big(\delta_{\mathrm{GL}}\big) = \lim_{M \to \infty}\min_{i=1,\dots,M} 2B(\mu_i,\pi).
\end{equation*}
When all the outlying sequences are identically distributed, i.e., $\mu_i=\mu \ne \pi$, $i=1,\dots,M$,  the achievable error exponent of the GL test in \eqref{eq:GLtest1} converges to the optimal one achievable when both $\mu$ and $\pi$ are known, i.e., $2B(\mu,\pi)$.
\end{proposition}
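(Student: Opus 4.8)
The plan is to bound the error probability with the method of types and read off the exponent from Sanov's theorem, reducing everything to a constrained minimization over limiting empirical distributions. Write the test statistic as $G(S)=\sum_{j\in S^{C}}D\!\left(\gamma_{j}\big\|\hat{\pi}_{S}\right)$ with $\hat{\pi}_{S}=\frac{1}{M-T}\sum_{j\in S^{C}}\gamma_{j}$, so that $\delta_{\mathrm{GL}}$ errs exactly when some wrong set $\hat{S}\neq S$ satisfies $G(\hat{S})\le G(S)$. Because each $\gamma_{i}$ is a type and $G$ is continuous in $(\gamma_{1},\dots,\gamma_{M})$, the error event is a closed subset of a product of simplices. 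First I would apply the union bound over the $\binom{M}{T}$ hypotheses (a polynomial prefactor that disappears under $-\frac{1}{n}\log(\cdot)$ as $n\to\infty$) and then invoke Sanov to obtain, under true set $S$,
\[
\alpha\big(\delta_{\mathrm{GL}}\big)=\min_{S\in\mathcal{S}}\ \min_{\hat{S}\neq S}\ \min_{\{\nu_{i}\}}\ \Big\{\textstyle\sum_{i\in S}D(\nu_{i}\|\mu_{i})+\sum_{j\in S^{C}}D(\nu_{j}\|\pi)\Big\},
\]
where the innermost minimum is over all $\{\nu_{i}\}\subset\mathcal{P(Y)}$ for which $G(\hat{S})\le G(S)$ holds when evaluated at $\nu$.

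For the consistency claim I would establish that the \emph{population} objective is strictly minimized by the true set. At the true point ($\nu_{i}=\mu_{i}$ for $i\in S$ and $\nu_{j}=\pi$ for $j\in S^{C}$) one has $\hat{\pi}_{S}=\pi$ and $G(S)=0$, whereas any $\hat{S}\neq S$ must place at least one outlier into $\hat{S}^{C}$; since $\mu_{i}\neq\pi$ this forces $G(\hat{S})>0$. Hence the true point is infeasible for the inner minimization, every summand $D(\nu_{i}\|\mu_{i})$ or $D(\nu_{j}\|\pi)$ is strictly positive there, and compactness together with lower semicontinuity of the divergence give a strictly positive minimum, i.e.\ $\alpha(\delta_{\mathrm{GL}})>0$.

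To evaluate the limit as $M\to\infty$ I would argue that the binding error is a single \emph{swap}, $\hat{S}=(S\setminus\{a\})\cup\{b\}$ with $a\in S$ and $b\in S^{C}$. As $M\to\infty$ the pooled estimates concentrate at $\pi$ and are insensitive to one swapped coordinate, so $G(\hat{S})-G(S)\to D(\nu_{a}\|\pi)-D(\nu_{b}\|\pi)$ and the constraint becomes $D(\nu_{a}\|\pi)\le D(\nu_{b}\|\pi)$, which at the optimum is tight and met most cheaply at a common value $\nu_{a}=\nu_{b}=:\nu$. The swap exponent is then $\min_{\nu}\big[D(\nu\|\mu_{a})+D(\nu\|\pi)\big]$, and the elementary identity $\min_{\nu}\big[D(\nu\|p)+D(\nu\|q)\big]=-2\log\sum_{y}\sqrt{p(y)q(y)}=2B(p,q)$, attained at $\nu^{*}(y)\propto\sqrt{p(y)q(y)}$, yields $2B(\mu_{a},\pi)$; minimizing over the swapped outlier gives $\min_{i}2B(\mu_{i},\pi)$. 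The specialization $\mu_{i}\equiv\mu$ immediately produces $2B(\mu,\pi)$, whose optimality I would confirm with a converse: even with $\mu,\pi$ known the best exponent equals the Chernoff information between the two joint laws differing by one swap, which at the symmetric point $s=\tfrac12$ is exactly $-2\log\sum_{y}\sqrt{\mu(y)\pi(y)}=2B(\mu,\pi)$.

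The main obstacle is this $M\to\infty$ step. I must show rigorously that the single swap dominates, i.e.\ construct the extremal type achieving $2B(\mu_{a},\pi)$ and prove that every competing error event---multi-element swaps, and configurations that perturb the pooled mean $\hat{\pi}$---has an exponent no smaller, with enough uniformity to justify interchanging the $n\to\infty$ and $M\to\infty$ limits. Controlling the coupling through $\hat{\pi}_{S}$, which ties all typical coordinates together at finite $M$, is the delicate point that the concentration argument must handle.
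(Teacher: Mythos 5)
The paper never proves this proposition: it is imported verbatim as \cite[Theorem 10]{li2014universal}, so there is no in-paper proof to compare against. Your outline follows the same route that the cited source takes --- union bound over the $\binom{M}{T}$ hypotheses, Sanov's theorem (the paper's Lemma~\ref{lemma:sanov}) to convert the error event into a constrained minimization of sums of divergences, infeasibility of the zero-cost point to obtain a positive exponent, and the identity $\min_{\nu}\big[D(\nu\|p)+D(\nu\|q)\big]=2B(p,q)$ (the paper's Lemma~\ref{lemma:Bhatta}) to evaluate the dominant pairwise-swap exponent --- so the strategy is the standard one and the finite-$M$ exponential-consistency part is essentially complete. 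One wording quibble there: at a feasible point it is not true that \emph{every} summand $D(\nu_i\|\mu_i)$ or $D(\nu_j\|\pi)$ is strictly positive; what you need, and what your compactness and lower-semicontinuity argument actually delivers, is that the \emph{sum} vanishes only at the excluded true point.

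The genuine gap is exactly the one you flag, and it is not a small one: the $M\to\infty$ evaluation is the main content of the second half of the proposition. Restricting attention to single-swap events $\hat S=(S\setminus\{a\})\cup\{b\}$ only shows $\alpha(\delta_{\mathrm{GL}})\le\min_i 2B(\mu_i,\pi)$ in the limit (adding error events can only lower the exponent), so the substantive work is the matching lower bound: proving that multi-element swaps, and type configurations that pull the pooled center $\hat\pi_{\hat S}$ away from $\pi$, all have exponents at least $\min_i 2B(\mu_i,\pi)-o_M(1)$, with enough uniformity in $M$ to interchange the $n\to\infty$ and $M\to\infty$ limits. None of that is carried out. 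Separately, your converse via Chernoff information asserts the value at $s=\tfrac12$; the Chernoff exponent is the maximum over $s$ of the R\'enyi-type quantity, and $s=\tfrac12$ is the maximizer here only because of the symmetry between the two product laws $\mu\otimes\pi$ and $\pi\otimes\mu$ differing by a swap --- that step deserves a sentence of justification rather than an assertion.
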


Note that the number of hypotheses in the test \eqref{eq:GLtest1} is ${M \choose T}$. Thus, an exhaustive search over all possible hypotheses has time complexity that is polynomial in $M$ and exponential in $T$.


\subsection{Unknown Number of Identical Outlying Sequences}\label{subsec:3.2}
In this subsection, we consider the scenario where the number of outlying sequences is unknown, i.e., $\mathcal{S}=\{S: S\subset\{ 1,\dots,M\},1\le|S|<M/2\}$, and the hypotheses in $\mathcal{S}$ may have different numbers of outlying sequences. Moreover, it is assumed that the typical distributions are identical, and the outlying distributions are identical.

As shown in \cite{li2014universal}, by replacing the $\mu_i$, $i \in S$, and $\pi$ in \eqref{eq:likelihood} with their MLEs: $$\hat{\mu}_S = \hat{\mu}_i = \frac{\sum_{i\in S} \gamma_i}{|S|},\ \text{ and }\ \hat{\pi}_S = \frac{\sum_{j \in S^C} \gamma_j}{M-|S|},$$ the GL test in \eqref{eq:GLtest} is equivalent to
\begin{align}\label{eq:GLtest2}
  \delta_{\mathrm{GL}}(y^{Mn}) = \argmin_{S \subset \mathcal{S}}  \sum_{j\in S^C} D\left(\gamma_j\bigg\|\frac{ \sum_{j\in S^C}\gamma_j}{M-|S|} \right) + \sum_{i \in S} D\left(\gamma_i\bigg\|\frac{ \sum_{i\in S}\gamma_i}{|S|} \right ).
\end{align}
\begin{proposition} \cite[Theorem 11]{li2014universal}\label{prop:GLtest2}
Consider the scenario when the number of the outlying sequences is unknown, $1\le |S| <\frac{M}{2}$. If all the outlying
sequences are identically distributed, and all the typical sequences are identically distributed, the GL test in \eqref{eq:GLtest2} is universally exponentially consistent.
\end{proposition}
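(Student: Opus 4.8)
The plan is to treat the GL test in \eqref{eq:GLtest2} as a function of the empirical distributions $(\gamma_1,\dots,\gamma_M)$ alone and to bound the error probability by the method of types. First I would simplify the cost. Using $\sum_{i\in S}\gamma_i=|S|\,\hat\mu_S$ and $\sum_{j\in S^C}\gamma_j=(M-|S|)\,\hat\pi_S$, a direct computation gives
\begin{equation*}
C(S)\triangleq\sum_{j\in S^C}D\!\left(\gamma_j\big\|\hat\pi_S\right)+\sum_{i\in S}D\!\left(\gamma_i\big\|\hat\mu_S\right)=|S|\,H(\hat\mu_S)+(M-|S|)\,H(\hat\pi_S)-\sum_{i=1}^M H(\gamma_i),
\end{equation*}
where $H$ denotes Shannon entropy. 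Since $\sum_{i=1}^M H(\gamma_i)$ does not depend on $S$, the test selects $\hat S=\argmin_{S\in\mathcal S}C(S)$, and $C(S)$ is a continuous function of the type profile.

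Next I set up the type-counting bound. Let $S^*$ be the true outlying set and write $q_i=\mu$ for $i\in S^*$ and $q_i=\pi$ otherwise; because the sequences are independent, under $S^*$ the probability of observing any type profile $(\nu_1,\dots,\nu_M)$ is, up to a polynomial-in-$n$ prefactor, $\exp\!\big(-n\sum_{i=1}^M D(\nu_i\|q_i)\big)$. An error requires $C(\hat S)\le C(S^*)$ for some $\hat S\ne S^*$, the inequality being weak so as to absorb arbitrary tie-breaking. Applying the union bound over the finitely many hypotheses and over all types, the error exponent is lower bounded by
\begin{equation*}
\alpha\big(\delta_{\mathrm{GL}}\big)\ \ge\ \min_{\hat S\ne S^*}\ \inf_{\nu:\,C(\hat S;\nu)\le C(S^*;\nu)}\ \sum_{i=1}^M D(\nu_i\|q_i),
\end{equation*}
where $C(\cdot;\nu)$ makes the dependence on the type profile explicit.

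The crux is to show each inner infimum is strictly positive, which I would argue geometrically. The objective $\sum_i D(\nu_i\|q_i)$ is continuous on the compact product of simplices (finite since every $q_i$ has full support) and vanishes exactly at $\nu=q$. Evaluating the statistic there gives $C(S^*;q)=0$, since pooling identically distributed sequences reproduces $\mu$ and $\pi$ and every divergence vanishes. For any $\hat S\ne S^*$ I claim $C(\hat S;q)>0$: since $\hat S$ mislabels at least one index, at least one of the two groups it induces pools both a $\mu$-sequence and a $\pi$-sequence, so its pooled center is a nontrivial mixture of $\mu$ and $\pi$ and contributes a strictly positive divergence; the facts $\mu\ne\pi$ and $1\le|S|<M/2$ guarantee that such a mixed group always exists, whether the mislabeled index is a true outlier placed in $\hat S^C$ or a typical index placed in $\hat S$. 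Hence $q$ lies outside the closed set $\{\nu:C(\hat S;\nu)\le C(S^*;\nu)\}$; by continuity and compactness the infimum is attained at some $\nu^*\ne q$, and being the minimum of $\sum_i D(\nu_i\|q_i)$ over a closed set excluding its unique zero, it is strictly positive. Taking the minimum over the finitely many $\hat S$ preserves positivity, yielding $\alpha(\delta_{\mathrm{GL}})>0$.

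The main obstacle is precisely this last step: verifying, uniformly over all $\hat S\ne S^*$ of every admissible cardinality, that a wrong grouping necessarily forces a mixed pool and hence $C(\hat S;q)>0$. Because $|S|$ is unknown, hypotheses of different sizes compete simultaneously, so the combinatorial case analysis must cover under-inclusion and over-inclusion of outliers as well as swaps; the full-support assumption is what keeps every divergence finite and continuous, so that the strict separation at $\nu=q$ can be upgraded by compactness into a strictly positive exponent.
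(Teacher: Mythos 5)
Your argument is correct and follows essentially the same route as the source: the paper states Proposition~\ref{prop:GLtest2} without proof (citing Theorem~11 of the reference), but the structure it reports in the Example of Section~\ref{subsec:5.3} --- exponential consistency reduces via Sanov's theorem (Lemma~\ref{lemma:sanov}) to showing that the constrained divergence minimization in \eqref{eq:exp_GL} is strictly positive --- is exactly your reduction, and your ``mixed pool'' observation at $\nu=q$ combined with closedness and compactness is the standard way to verify that positivity uniformly over all competing $\hat S$. No gaps; the entropy rewriting of the cost is a nice extra that also supplies the continuity of the statistic at boundary types.
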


Note that the number of hypotheses in the GL test \eqref{eq:GLtest2} is $ \sum_{i=1}^{\lfloor M/2 \rfloor}{M \choose i},$ which is exponential in $M$. The time complexity of  \eqref{eq:GLtest2} is even larger than that of \eqref{eq:GLtest1}, without the knowledge of the number of outlying sequences.


Although the exponential consistency of the GL test under various universal settings was established in \cite{li2014universal}, the high time complexity, which is at least exponential in the number of outlying sequences $T$, limits its usage in applications.

In the following two sections, we reformulate the universal outlying sequence detection problem as a distribution clustering problem, and further propose clustering-based algorithms that are computationally efficient and exponentially consistent. In particular, we reduce the time complexity of our tests to $O(M)$, while still retaining a comparable error probability.



\section{Problem Reformulation as Distribution Clustering }\label{sec:cluster}


The GL test can be interpreted as combinatorial clustering over the probability simplex with the KL divergence as the distance measure.
More specifically, consider the problem of clustering distributions $p_1,p_2,\ldots$ into $K$ clusters, using a set of  cluster centers $c=\{c^{(1)}, \dots, c^{(K)}\}$, and a cluster assignment $C=\left\{C^{(1)}, \dots, C^{(K)}\right\}$. If we define the following cost function for distribution clustering
\begin{equation}
    TC \triangleq \sum_{k=1}^K \sum_{i \in C^{(k)}}  D(p_i\|c^{(k)}).
\end{equation}
As shown in \cite[Proposition 1]{banerjee2005clustering}, for a given cluster assignment $C=\left\{C^{(1)}, \dots, C^{(K)}\right\}$, the cost is minimized when
$$c^{(k)} = \frac{\sum_{i\in C^{(k)}} p_i}{|C^{(k)}|},$$
which is the average of the distributions within the $k$-th cluster. Thus, for a given cluster assignment  $C=\left\{C^{(1)}, \dots, C^{(K)}\right\}$, we have
\begin{equation}\label{eq:cost}
\min_{c^{(1)},\ldots,c^{(K)}}TC = \sum_{k=1}^K \sum_{i \in C^{(k)}}  D\left(p_i\bigg\|\frac{\sum_{i\in C^{(k)}} p_i}{|C^{(k)}|}\right).
\end{equation}

%


To connect the distribution clustering problem with the GL test, we first consider the scenario in Subsection \ref{subsec:3.2}, in which the typical distributions are identical and the outlying distributions are also identical. In view of \eqref{eq:cost}, the GL test in \eqref{eq:GLtest2} can be interpreted as a distribution clustering algorithm for the empirical distributions $\gamma_i$, $1\leq i\leq M$, with $K=2$ clusters. The first term in \eqref{eq:GLtest2} is the minimum cost in the typical cluster, and the second term is the minimum cost within the outlying cluster, for a given choice of $S$. The GL test then searches over all possible cluster assignments, and chooses the one with minimum cost.


We then consider the scenario in Subsection \ref{subsec:3.1}, where the typical distributions are identically distributed, but outliers are not (i.e., outlying distributions may not form a cluster).  We can utilize the knowledge of the number of outlying sequences, and it suffices to only cluster the empirical distributions of all the typical sequences, as shown in the GL test \eqref{eq:GLtest1}.


Thus, both the GL tests in \eqref{eq:GLtest1} and \eqref{eq:GLtest2} are equivalent to empirical distribution clustering on the probability simplex using KL divergence as the distance metric.

While the distribution clustering problem itself is known to be NP-hard \cite{chaudhuri2008finding}, there are many existing approximation algorithms with low time complexity, e.g., the K-means algorithm  \cite{lloyd1982least}.  Here, we introduce the K-means distribution clustering algorithm in Algorithm \ref{al:kmeans}, as proposed in \cite{banerjee2005clustering}.

\begin{algorithm}[!tb]
   \caption{K-means distribution clustering algorithm}\label{al:kmeans}
    \begin{algorithmic}
        \State \textbf{Input:} $M$ distributions $p_1$,\dots , $p_M$, defined on $\mathcal{Y}$, number of clusters $K$.
        \State \textbf{Output:} partition set $\{C^{(k)}\}_{k=1}^K$.
        \State \textbf{Initialization:} $\{c^{(k)}\}_{k=1}^K$ (Will be specified in Algorithms \ref{al:kmeans1} and \ref{al:kmeans2}.)
        \State \textbf{Method:}
        \While{not converge}
            \State \{Assignment Step\}
            \State Set $C^{(k)} \leftarrow \emptyset$, $1 \le k \le K$
            \For {$i = 1$ to $M$}
                \State $C^{(k^*)} \leftarrow C^{(k^*)} \cup \{p_i\}$
                \State where $k^*=\argmin_k D(p_i\|c^{(k)})$
            \EndFor
            \State \{Re-estimation Step\}
            \For {$k = 1$ to $K$}
            \vspace{0.15cm}
                \State $c^{(k)}\leftarrow\frac{\sum_{i\in C^{(k)}} p_i}{|C^{(k)}|}$
            \EndFor
        \EndWhile
    \State \textbf{Return} $\{C^{(k)}\}_{k=1}^K$
    \end{algorithmic}
\end{algorithm}

\begin{proposition}\label{prop:stop}\cite[Proposition 3]{banerjee2005clustering}
The cost function in \eqref{eq:cost} of Algorithm \ref{al:kmeans} is monotonically decreasing with steps. Moreover, Algorithm \ref{al:kmeans} terminates in a finite number of steps at a partition that is locally optimal, i.e., the total cost cannot be decreased by either (a) the assignment step, or (b) changing the means of any existing clusters.
\end{proposition}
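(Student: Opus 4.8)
The plan is to run the standard alternating-minimization (block coordinate descent) convergence argument for Lloyd-type algorithms, specialized to the KL-divergence cost in \eqref{eq:cost}. First I would show that neither step of Algorithm \ref{al:kmeans} can increase the cost. For the assignment step, with the centers $\{c^{(k)}\}$ held fixed the total cost $\sum_i D(p_i\|c^{(k(i))})$ decomposes as a sum over individual points, and each $p_i$ is (re)assigned to the center achieving $\min_k D(p_i\|c^{(k)})$; since this is the per-point minimum over $k$, the new contribution of every point is no larger than under any other assignment, in particular the previous one, so the total cost does not increase. For the re-estimation step, the assignment $\{C^{(k)}\}$ is held fixed and each center is replaced by the cluster mean $\sum_{i\in C^{(k)}}p_i/|C^{(k)}|$, which by \cite[Proposition 1]{banerjee2005clustering} is precisely the minimizer of the cost over the choice of centers (this is exactly what \eqref{eq:cost} records), so the cost again does not increase. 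Concatenating the two steps shows the cost is monotonically non-increasing across iterations.

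For finite termination, I would exploit that the cost after each re-estimation step is a function of the current partition alone, since the cost-minimizing centers are determined by the partition (the means). As there are at most $K^M$ distinct partitions of the $M$ distributions into $K$ clusters, the cost takes only finitely many values. The crux is to argue that whenever the partition genuinely changes from one iteration to the next, the cost strictly decreases; combined with monotonicity, this forces that no partition can recur (a recurrence would make a value strictly less than itself), so the partition changes only finitely often and the algorithm halts.

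The main obstacle, and really the only delicate point, is making ``a change in the partition forces a strict decrease'' airtight in the presence of ties in the assignment step, i.e.\ points equidistant in KL divergence to two or more centers. I would handle this by fixing a deterministic tie-breaking rule that keeps a point in its current cluster whenever it is among the minimizers; under this convention a point is reassigned only when the new center is \emph{strictly} closer, so the assignment step strictly lowers the cost as soon as any point moves. Complementarily, strict convexity of $c \mapsto \sum_{i\in C^{(k)}} D(p_i\|c)$ on the interior of the simplex (its Hessian is diagonal with strictly positive entries $p_i(y)/c(y)^2$) guarantees the mean is the \emph{unique} minimizer, so the re-estimation step strictly decreases the cost unless the centers already equal the cluster means.

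Finally, for local optimality I would read the two conditions directly off the stopping configuration. The algorithm halts only when a full iteration leaves the partition fixed; at that point the assignment step produced no strictly closer reassignment, which is exactly condition (a) that no reassignment can lower the cost, and the centers already coincide with the cluster means, which by the uniqueness above is exactly condition (b) that perturbing the means cannot lower the cost. This shows the terminal partition is locally optimal in the stated sense, completing the proof.
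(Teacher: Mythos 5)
Your proof is correct and follows essentially the same route that the paper (deferring to \cite[Proposition 3]{banerjee2005clustering}) sketches in the remark after the proposition: the cost \eqref{eq:cost} is non-increasing under both the assignment step (pointwise minimization over centers) and the re-estimation step (the cluster mean minimizes the Bregman cost), and finiteness of the set of partitions then forces termination at a locally optimal configuration. Your explicit tie-breaking rule and the strict-convexity argument for uniqueness of the mean fill in details the paper leaves implicit, but the underlying argument is the same.
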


\begin{remark}
The proof of Proposition \ref{prop:stop} in \cite{banerjee2005clustering} follows by the fact that the number of distinct cluster assignments is finite, and the fact that Algorithm \ref{al:kmeans} monotonically decreases the cost function in \eqref{eq:cost}. As shown in \cite{blomer2016theoretical}, for any Bregman divergence, the number of iterations of the K-means algorithm in the worst case can be upper bounded by $O(M^{K^2|\mathcal Y|})$, where $K$ is the number of clusters.

For our problem, the number of clusters is 2. Then, Algorithm \ref{al:kmeans} has polynomial time complexity in $M$ even in the worst case. In the following section, we will show that the exponential consistency can be established with a well-designed initialization in the first step, i.e., we do not need to wait for the algorithm to converge. As also will be shown in Section \ref{sec:num}, our tests usually converge in a few steps.
\end{remark}




\section{Clustering-Based Tests}\label{sec:new}
In this section, we propose linear-complexity tests based on the K-means clustering algorithm. We show in all three scenarios that the clustering-based tests using KL divergence as the distance metric are also exponentially consistent, while only taking linear time in $M$. For the scenario that the typical and outlying distributions form two clusters, we show that the clustering-based test is exponentially consistent, but the GL test is not even applicable.
\vspace{-0.1cm}
\subsection{Known Number of Outlying Sequences}\label{subsec:5.1}
We first consider the scenario where the number of outlying sequences $T$ is known and the typical distributions are identical.

\begin{algorithm}
   \caption{Clustering with known number of outlying sequences}\label{al:kmeans1}
    \begin{algorithmic}
        \State \textbf{Input:} $\gamma_1$,\dots , $\gamma_M$, number of the outlying sequences $T$.
        \State \textbf{Output:} A set of outlying sequences $S$.
        \State \textbf{Initialization:}
        \State Choose one distribution $\gamma^{(0)}$ from $\gamma_1$,\dots , $\gamma_M$ arbitrarily
            \For {$i = 1$ to $M$}
                \State Compute $D(\gamma_i\|\gamma^{(0)})$
            \EndFor
            \State $\hat{\pi} \leftarrow \gamma^*$
            \State where $D(\gamma^*\|\gamma^{(0)})$ is the $\lceil\frac{M}{2} \rceil$-th element among $D(\gamma_i\|\gamma^{(0)})$, $1\le i\le M$, arranged in an ascending order
        \State \textbf{Method:}
        \While{not converge}
            \State \{Assignment Step\}
            \State Set $S \leftarrow S^*$,
            \State where $S^*=\argmax_{S' \in \mathcal{S},|S'|=T} \sum_{i \in S' }D(\gamma_i\|\hat{\pi})$
            \State \{Re-estimation Step\}
            \State $\hat{\pi}\leftarrow\frac{\sum_{j\in S^C} \gamma_j}{M-T}$
        \EndWhile
    \State \textbf{Return} $S$
    \end{algorithmic}
\end{algorithm}
Note that Algorithm \ref{al:kmeans} cannot be directly applied here, because the outlying distributions may not form a cluster and Algorithm \ref{al:kmeans} does not employ the knowledge of $T$.

Motivated by the test in \eqref{eq:GLtest1}, we design Algorithm \ref{al:kmeans1}.
The novelty of this algorithm lies in the construction of the first cluster center for the typical distribution and the iterative approach based on K-means to update it.

By the initialization in Algorithm \ref{al:kmeans1}, $\gamma^*$ is generated from $\pi$ with high probability. The intuition behind this is that: if $\gamma^{(0)}$ is generated from the typical distribution $\pi$ as shown in Fig. \ref{fig:case1}, then only $|S|<\frac{M}{2}$ empirical distributions which are generated from $\mu_i$ are far from $\gamma^{(0)}$; if $\gamma^{(0)}$ is generated from some $\mu_i$ as shown in Fig. \ref{fig:case2}, then there are at least $M-|S|>\frac{M}{2}$ of $D(\gamma_i \| \gamma^{(0)})$ concentrating at $D(\pi\|\mu_i)$. Thus, the $\lceil\frac{M}{2} \rceil$-th element among all $D(\gamma_i\|\gamma^{(0)})$, $1\leq i\leq M$, arranged in an ascending order, is close to $D(\pi\|\mu_i)$, and $\gamma^*$ is generated from $\pi$ with high probability.

\begin{figure}[!h]
  \centering
  \subfigure[$\gamma^*$ is generated from the typical distribution]{
    \label{fig:case1} 
    \includegraphics[width=10cm]{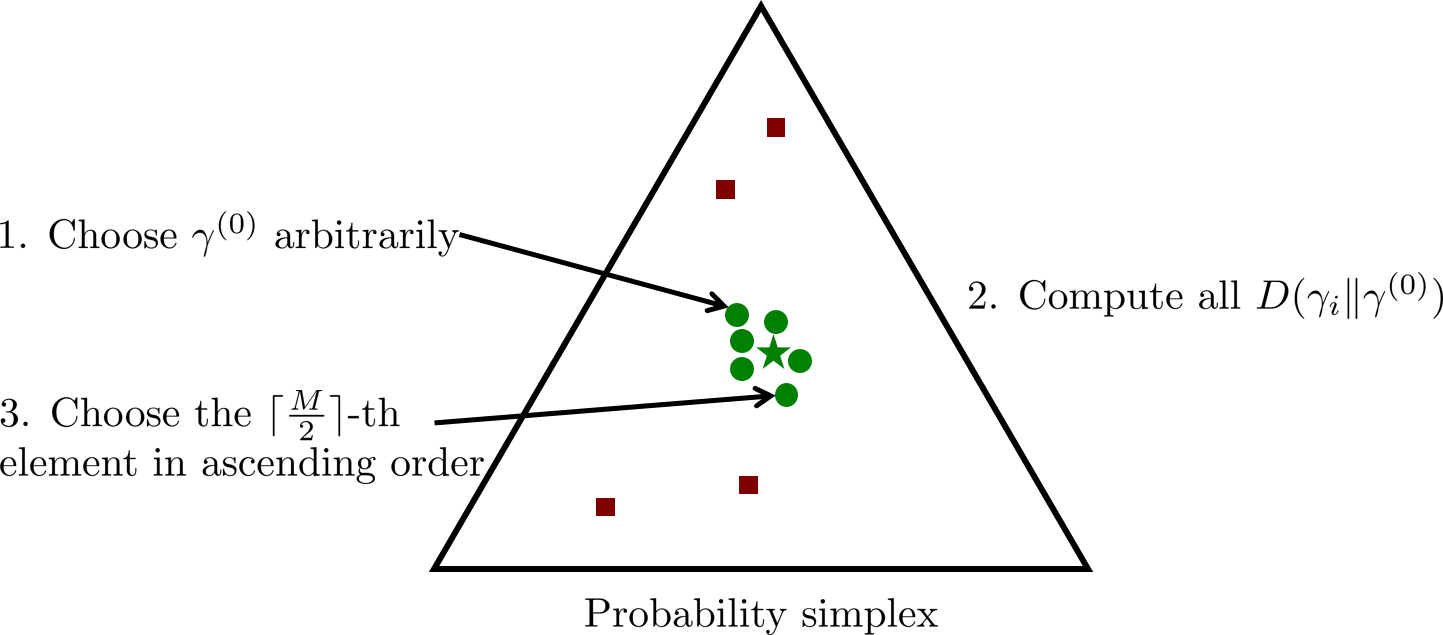}}
  \\
  \subfigure[$\gamma^*$ is generated from the outlying distribution]{
    \label{fig:case2} 
    \includegraphics[width=10cm]{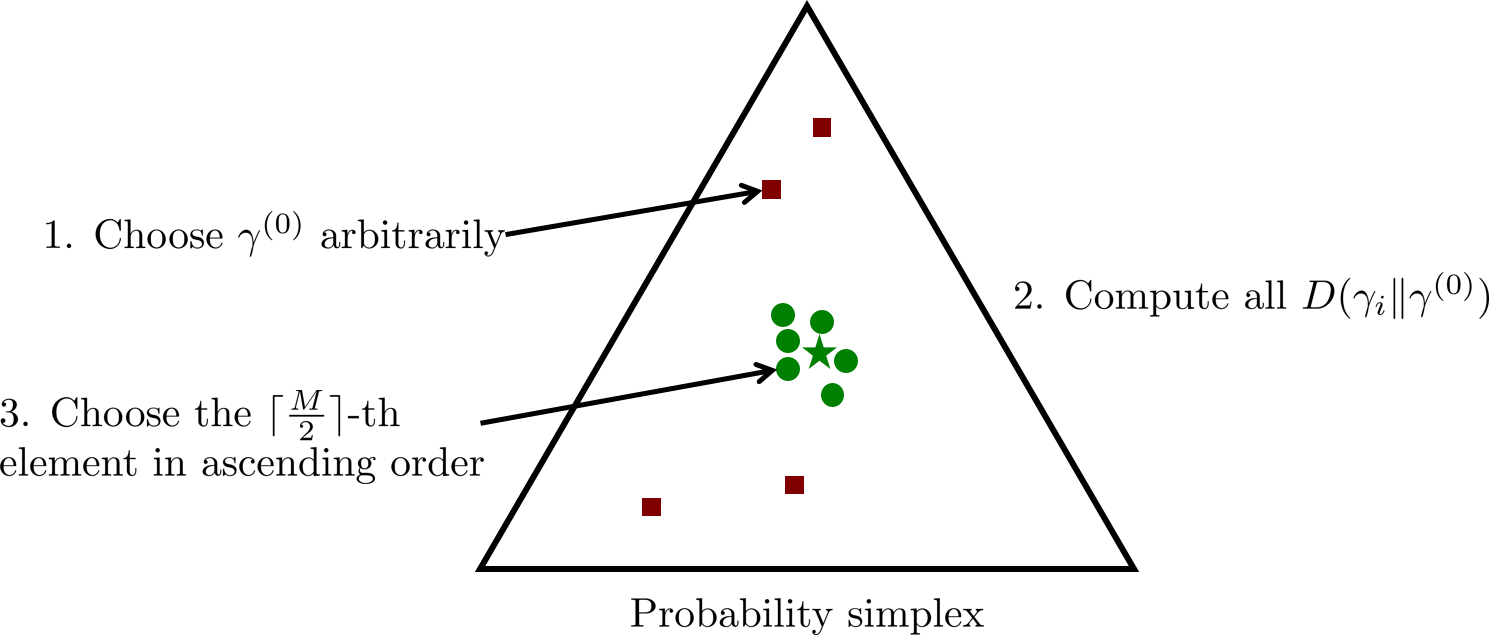}}
  \caption{Diagrams of Algorithm \ref{al:kmeans1}, where $M=10$, $|S|=4$. Green star denotes the identical typical distribution, and green circles (red squares) denote the  empirical distributions of typical (outlying) sequences, respectively. }
\end{figure}

Let $\delta_{2}$ denote the test described in Algorithm \ref{al:kmeans1}, and $\delta^{(\ell)}_{2}$ denote the test that runs $\ell$ number of K-means iterations in Algorithm \ref{al:kmeans1}.

In the following theorem, we show that the test $\delta^{(1)}_{2}$ (only one iteration step) is universally exponentially consistent.

\begin{theorem}\label{thm:kmeans1}
Consider the scenario when the number of outlying sequences $T$ is known. If the typical distributions are identical, the test $\delta^{(1)}_{2}$, which runs one K-means iteration in Algorithm \ref{al:kmeans1} is universally exponentially consistent. The achievable error exponent of $\delta^{(1)}_{2}$ can be upper bounded by
\begin{equation}
  \alpha\big(\delta^{(1)}_{2}\big) < \lim_{M \to \infty}\min_{i=1,\dots,M} 2B(\mu_i,\pi).
\end{equation}
Furthermore, the time complexity of the test $\delta^{(1)}_{2}$ is $O(M)$.
\end{theorem}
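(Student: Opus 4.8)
The plan is to separate the analysis into the initialization phase and the single assignment step, to bound the failure probability of each by the method of types (Sanov), and then to identify the dominant error event to pin down the exponent; the complexity claim will follow by directly counting the operations in Algorithm~\ref{al:kmeans1}.

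First I would establish that the initialization reliably produces an estimate $\hat\pi=\gamma^*$ that is the empirical distribution of a \emph{typical} sequence, and hence close to $\pi$, with probability tending to one exponentially fast. The key combinatorial observation is that, since $|S|<M/2$, there are $M-|S|\ge\lceil M/2\rceil$ typical sequences; consequently, whether the arbitrarily chosen $\gamma^{(0)}$ is generated by $\pi$ or by some $\mu_i$, at most $|S|<\lceil M/2\rceil$ of the ordered values $D(\gamma_i\|\gamma^{(0)})$ can be out of place, so the $\lceil M/2\rceil$-th smallest value is attained by a typical sequence. I would make this rigorous through a good event $G$ on which every $\gamma_i$ lies within total-variation radius $\epsilon$ of its generating distribution: on $G$ (for $\epsilon$ small relative to $\min_i D(\mu_i\|\pi)$ and the gaps among the $\mu_i$) the ordering is as claimed and $\gamma^*$ is within $O(\epsilon)$ of $\pi$, while Sanov's theorem gives $P(G^c)\le\mathrm{poly}(M)\,e^{-n\,c(\epsilon)}$ with $c(\epsilon)>0$.

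Next, conditioned on a good center $\hat\pi\approx\pi$, I would show the assignment step $S^*=\argmax_{|S'|=T}\sum_{i\in S'}D(\gamma_i\|\hat\pi)$ recovers the true $S$ with exponentially high probability. Since $D(\gamma_j\|\hat\pi)\approx0$ for typical $j$ and $D(\gamma_i\|\hat\pi)\approx D(\mu_i\|\pi)>0$ for outlying $i$, the $T$ largest scores are exactly the outliers on $G$. The relevant error is a pairwise swap between an outlier $i$ and a typical $j$, namely $D(\gamma_j\|\gamma^*)\ge D(\gamma_i\|\gamma^*)$, which involves three empirical distributions $\gamma_i\sim\mu_i$, $\gamma_j\sim\pi$, $\gamma^*\sim\pi$; its exponent is, by the method of types,
\begin{equation*}
  E_i \;=\; \min_{q_1,q_2,q_3:\,D(q_2\|q_3)\ge D(q_1\|q_3)} D(q_1\|\mu_i)+D(q_2\|\pi)+D(q_3\|\pi).
\end{equation*}
This is strictly positive because the true configuration $(\mu_i,\pi,\pi)$ violates the constraint, so a union bound over the finitely many pairs $(i,j)$ together with the initialization bound gives $\alpha(\delta^{(1)}_2)\ge\min_i E_i>0$, i.e.\ universal exponential consistency.

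Finally, for the upper bound I would reuse this swap event at the index $i^\star=\argmin_i 2B(\mu_i,\pi)$. When the center is the exact $\pi$, as in the $M\to\infty$ GL limit, fixing $q_3=\pi$ collapses the displayed minimization to $\min_q[D(q\|\mu_{i^\star})+D(q\|\pi)]=2B(\mu_{i^\star},\pi)$, attained at the Bhattacharyya midpoint. Because the one-step test instead uses a single noisy $\gamma^*$ as the center, the variable $q_3$ is free, which can only lower the minimum; a first-order perturbation of $q_3$ away from $\pi$ shows the decrease is strict, giving $\alpha(\delta^{(1)}_2)<\lim_{M\to\infty}\min_i 2B(\mu_i,\pi)$. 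The $O(M)$ complexity is immediate, since the initialization and the single assignment step each compute $M$ KL divergences and one order-statistic selection. I expect the main obstacle to be precisely this upper bound: one must exhibit an admissible sample configuration in which $\gamma^*$ is simultaneously selected as the $\lceil M/2\rceil$-th order statistic \emph{and} drifts toward the decision boundary, so that the three-distribution exponent genuinely lower-bounds the algorithm's error probability rather than some unrelated event, and then confirm the strictness of the gap to $2B(\mu_{i^\star},\pi)$ through the perturbation argument.
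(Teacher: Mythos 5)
Your overall architecture matches the paper's: split the error into an initialization failure and an assignment failure, bound each via Sanov's theorem applied to divergence-ordering events among a few empirical distributions, and obtain the upper bound on the exponent by restricting the resulting optimization so that it collapses to $2B(\mu_{i},\pi)$ via Lemma~\ref{lemma:Bhatta}. Two steps, however, do not go through as you state them. First, your good event $G$ (every $\gamma_i$ within total-variation $\epsilon$ of its generator) does not imply ``the ordering is as claimed'' when $\gamma^{(0)}$ is drawn from some outlying $\mu_{i_1}$ and another outlier $\mu_{i_2}$ satisfies $D(\mu_{i_2}\|\mu_{i_1})=D(\pi\|\mu_{i_1})$: nothing in the hypotheses rules this out, the ``gaps among the $\mu_i$'' you invoke can be zero, and then $\gamma_{i_2}$ can sit anywhere inside the typical band and be selected as the $\lceil M/2\rceil$-th order statistic. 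The paper does not use a uniform concentration event here; it isolates exactly this failure mode as the sandwich event $E_2=\{D(\gamma_{j_1}\|\gamma_{i_1})<D(\gamma_{i_2}\|\gamma_{i_1})<D(\gamma_{j_2}\|\gamma_{i_1})\}$ and computes its exponent $\alpha_2$ separately, arguing positivity from the fact that the zero-cost configuration $(\pi,\pi,\mu_{i_1},\mu_{i_2})$ violates the strict inequalities defining the constraint set. You need this second event (or an equivalent), not just concentration radii.

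Second, your strictness argument for $\alpha(\delta^{(1)}_2)<2B(\mu_{i^\star},\pi)$ fails as described. At the configuration $q_1=q_2=q^*$ (the Bhattacharyya midpoint) and $q_3=\pi$, the objective depends on the center only through $D(q_3\|\pi)$, which is stationary (indeed locally minimized) at $q_3=\pi$, and the constraint $D(q_1\|q_3)\le D(q_2\|q_3)$ holds with equality for \emph{every} $q_3$ once $q_1=q_2$; so a first-order perturbation of $q_3$ alone cannot decrease the value, and a joint perturbation of $(q_1,q_3)$ is first-order neutral (the gain in $D(q_1\|\mu_{i^\star})$ is exactly cancelled by the increase needed elsewhere to keep the constraint). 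The paper proceeds differently: it first lower-bounds the error probability by a single swap event to get $\alpha(\delta^{(1)}_2)\le\alpha_1$, and then compares $\alpha_1$, the minimum over $C_1$, with the minimum over the sub-constraint set where $q_1=q_3$, which evaluates to $2B$ by Lemma~\ref{lemma:Bhatta}; even there the strictness of the inequality is asserted rather than derived, so a rigorous ``$<$'' requires a second-order or global argument, not the perturbation you propose. Your consistency argument for the assignment step (whose exponent is exactly the paper's $\alpha_1$) and the $O(M)$ complexity count are correct and match the paper.
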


\begin{proof}[Proof sketch]
Errors made by $\delta^{(1)}_{2}$ in the initialization step can be decomposed into two scenarios. If $\gamma^{(0)}$ is generated from typical distribution $\pi$, an error occurs when $\gamma^*$ is actually generated from an outlying distribution. The probability of this event can be upper bounded by the probability of the following event
$$E_1 = \{D(\gamma_i\|\gamma_{j_1})<D(\gamma_{j_2}\|\gamma_{j_1}), \exists\ i \in S,\  \exists\  j_1,j_2 \in S^C \}.$$
If $\gamma^{(0)}$ is generated from an outlying distribution, the error probability can be upper bounded by the probability of the following event
$$E_2 =\{D(\gamma_{j_1}\|\gamma_{i_1})<D(\gamma_{i_2}\|\gamma_{i_1})<D(\gamma_{j_2}\|\gamma_{i_1}), \exists\ i_1,i_2 \in S,\  \exists\  j_1,j_2 \in S^C\}.$$
By Sanov's theorem \cite{cover2012elements}, we can prove that the probabilities of both $E_1$ and $E_2$ decay exponentially fast.

The error probability in the assignment step can be upper bounded by the probability of the same event $E_1$, which decays exponentially fast by Sanov's theorem.


Both the initialization and the assignment steps in Algorithm \ref{al:kmeans1} can be computed in linear time $O(M)$\cite{blum1973time}, independent of $T$.

The details of the proof can be found in Appendix \ref{appx:thm1}.
\end{proof}

A comparison between Proposition \ref{prop:GLtest1} and Theorem \ref{thm:kmeans1} shows that $\delta^{(1)}_{2}$ has a smaller error exponent than that of the GL test in \eqref{eq:GLtest1} as $M\to \infty$, but has a linear time complexity in $M$.

Although the exponential consistency can be established for the one step test $\delta^{(1)}_{2}$, it is of further interest to investigate whether the performance of Algorithm \ref{al:kmeans1} improves with more iterations.
In the following theorem, we show that the asymptotic performance of Algorithm \ref{al:kmeans1} does not decrease with more iterations. We will also see in our numerical results in Section  \ref{sec:num} that running more iterations of K-means always results in better performance.



\begin{theorem}\label{thm:moresteps}
For each $M \ge 3$, when the number of outlying sequences $T$ is known, the test $\delta^{(\ell)}_{2}$ is universally exponentially consistent. As $M\to \infty$, the achievable error exponent of $\delta^{(\ell)}_{2}$ in Algorithm \ref{al:kmeans1} can be lower bounded by
\begin{equation}
  \lim_{M\to \infty}\alpha\big(\delta^{(\ell)}_{2} \big) \ge \lim_{M\to \infty} \alpha\big(\delta^{(1)}_{2}\big).
\end{equation}
Furthermore, the time complexity of the test $\delta^{(\ell)}_{2}$ is $O(M\ell)$.
\end{theorem}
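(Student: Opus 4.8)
The plan is to exploit a \emph{once-correct-stays-correct} invariant of the K-means iterations, reducing the analysis of $\delta^{(\ell)}_{2}$ to the already-established behavior of $\delta^{(1)}_{2}$ (Theorem \ref{thm:kmeans1}) plus a single extra error term that I will argue is asymptotically negligible.

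Fix a hypothesis with true outlying set $S\in\mathcal{S}$, and let $A_k$ be the event that the assignment step of the $k$-th iteration of Algorithm \ref{al:kmeans1} returns exactly $S$. Since the algorithm outputs the set produced by its final assignment step, $\delta^{(\ell)}_{2}$ errs precisely on $A_\ell^c$ and $\delta^{(1)}_{2}$ errs on $A_1^c$. I would introduce the ideal center $\bar\pi \triangleq \frac{1}{M-T}\sum_{j\in S^C}\gamma_j$, the average of all genuinely typical empirical distributions, together with the clean-separation event
$$B = \Big\{\, \max_{j\in S^C} D(\gamma_j\|\bar\pi) \;<\; \min_{i\in S} D(\gamma_i\|\bar\pi) \,\Big\}.$$
The structural observation is that whenever an assignment is correct, the subsequent re-estimation step resets the center exactly to $\bar\pi$; and given center $\bar\pi$, the next assignment step---which keeps the $T$ distributions farthest from the center---returns $S$ precisely on $B$. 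Hence $A_k\cap B\subseteq A_{k+1}$, and since $\bar\pi$ is a fixed point, induction gives $A_1\cap B\subseteq A_\ell$ for every $\ell\ge1$. Equivalently $A_\ell^c\subseteq A_1^c\cup B^c$ under every hypothesis, so by the union bound
$$e\big(\delta^{(\ell)}_{2}\big) \;\le\; e\big(\delta^{(1)}_{2}\big) + \max_{S\in\mathcal{S}}\Pr_S[B^c],$$
which immediately yields $\alpha(\delta^{(\ell)}_{2})\ge\min\{\alpha(\delta^{(1)}_{2}),\alpha_B\}$, where $\alpha_B$ denotes the exponent of $\max_{S}\Pr_S[B^c]$.

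What remains is to lower bound $\alpha_B$, and this is the main obstacle. I would analyze $B^c$ with Sanov's theorem exactly as in Theorem \ref{thm:kmeans1}: $B$ fails only when some outlying $\gamma_i$ lands closer to $\bar\pi$ than some typical $\gamma_j$, an event confined to a region of the simplex bounded away from the true distributions, so $\Pr_S[B^c]$ decays exponentially and $\alpha_B>0$; this already gives universal exponential consistency of $\delta^{(\ell)}_{2}$ for each fixed $M\ge3$. The delicate point is the limiting inequality $\alpha_B\ge\alpha(\delta^{(1)}_{2})$ as $M\to\infty$. Here I would use that the center $\bar\pi$ defining $B$ is an average of $M-T$ independent typical empirical distributions, so by the law of large numbers $\bar\pi\to\pi$ as $M\to\infty$ and the center carries no estimation error in the limit; by contrast, the events $E_1,E_2$ that control $\alpha(\delta^{(1)}_{2})$ in Theorem \ref{thm:kmeans1} center their comparisons on a \emph{single} sequence of $n$ samples, whose fluctuations only make separation harder and hence the exponent smaller. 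Consequently $\alpha_B$ is no smaller than the exponent governing $\delta^{(1)}_{2}$, so $e(\delta^{(1)}_{2})$ is the dominant term and $\lim_{M\to\infty}\alpha(\delta^{(\ell)}_{2})\ge\lim_{M\to\infty}\alpha(\delta^{(1)}_{2})$. Finally, the $O(M\ell)$ complexity is immediate, since each iteration performs one linear-time selection of the $T$ largest divergences and one averaging---both $O(M)$ as in Theorem \ref{thm:kmeans1}---repeated $\ell$ times.
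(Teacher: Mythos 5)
Your proof is essentially correct, but it takes a genuinely different route from the paper. The paper's own argument never analyzes the iterations directly: it couples $\delta^{(1)}_{2}$ with the GL test $\delta_{\mathrm{GL}}$, observes that on the complement of $A\cup B$ (where $A,B$ are the respective error events) both tests return the same set, which must therefore be the \emph{global} minimizer of the clustering cost \eqref{eq:cost}; by the monotone-decrease property of K-means (Proposition \ref{prop:stop}) no further iteration can then change the outcome, so $e(\delta^{(\ell)}_{2})\le \mathbb{P}_S(A\cup B)$, and the exponent $\alpha(\delta^{(1)}_{2})\wedge\alpha(\delta_{\mathrm{GL}})$ collapses to $\alpha(\delta^{(1)}_{2})$ as $M\to\infty$ because Theorem \ref{thm:kmeans1} makes the one-step exponent the smaller of the two. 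You instead identify an absorbing event: once an assignment is correct the re-estimation step pins the center at $\bar\pi=\frac{1}{M-T}\sum_{j\in S^C}\gamma_j$, and on the clean-separation event $B$ the assignment stays at $S$ forever, giving $A_\ell^c\subseteq A_1^c\cup B^c$. This is self-contained (no appeal to the GL test or to its asymptotic optimality in Proposition \ref{prop:GLtest1}) and arguably more transparent about \emph{why} extra iterations cannot hurt; the price is that you must control a new error event $B^c$ that the paper's route avoids.

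The one step you should tighten is the claim $\lim_{M\to\infty}\alpha_B\ge\lim_{M\to\infty}\alpha(\delta^{(1)}_{2})$, which you justify only by the heuristic that single-sequence centers ``fluctuate more.'' This can be made rigorous, and in a way that matches your intuition: by Lemma \ref{lemma:sanov} the exponent of $B^c$ is a Sanov minimization whose constraint involves $\bar q=\frac{1}{M-T}\sum_{j\in S^C}q_j$; since displacing $\bar q$ from $\pi$ by a fixed amount costs $\Theta(M)$ in $\sum_j D(q_j\|\pi)$, the $M\to\infty$ limit is the problem $\min\{D(q_1\|\mu_i)+D(q_3\|\pi): D(q_1\|\pi)\le D(q_3\|\pi)\}$, i.e., the optimization defining $\alpha_1$ in \eqref{eq:exp_1} with the \emph{added} constraint $q_2=\pi$. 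Adding a constraint can only raise the minimum, so $\lim_M\alpha_B\ge\alpha_1\ge\alpha(\delta^{(1)}_{2})$ (the last inequality being \eqref{eq:thm1_lower}); indeed Lemma \ref{lemma:Bhatta} shows the limit equals $\min_{i}2B(\mu_i,\pi)$, which strictly exceeds $\lim_M\alpha(\delta^{(1)}_{2})$ by Theorem \ref{thm:kmeans1}. With that substitution for your law-of-large-numbers remark, the argument is complete.
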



\begin{proof}
From Theorem \ref{thm:kmeans1} and Proposition \ref{prop:GLtest1}, we know that both the one-step test $\delta_{2}^{(1)}$ and the GL test $\delta_{\mathrm{GL}}$ are exponentially consistent. Then we have
\begin{align}
  \alpha\big(\delta^{(1)}_{2}\big) &= \lim_{n\to \infty} - \frac{1}{n} \log \mathbb{P}_S\Big(\delta^{(1)}_{2}(y^{Mn}) \ne S\Big)   ,\\
  \alpha \big(\delta_{\mathrm{GL}} \big) & = \lim_{n\to \infty} - \frac{1}{n} \log \mathbb{P}_S\Big(\delta_{\mathrm{GL}}(y^{Mn}) \ne S\Big).
\end{align}
Denote
\begin{equation}
  A \triangleq \big\{ y^{Mn}: \delta^{(1)}_{2}(y^{Mn}) \ne S \big\},\quad B \triangleq \big\{y^{Mn}: \delta_{\mathrm{GL}}(y^{Mn}) \ne S \big\}.
\end{equation}
Then the set $A \cup B$ contains those $y^{Mn}$ such that at least one of $\delta^{(1)}_{2}$ and $\delta_{\mathrm{GL}}$ makes an error.
Thus,
\begin{equation}
  \lim_{n\to \infty} - \frac{1}{n} \log \Big(\mathbb{P}_S\big(A \cup B\big)\Big) = \alpha\big(\delta^{(1)}_{2}\big)\wedge \alpha \big(\delta_{\mathrm{GL}} \big).
\end{equation}
This shows that the probability that at least one of $\delta^{(1)}_{2}$ and $\delta_{\mathrm{GL}}$ makes an error decays exponentially fast. Thus, the one-step test $\delta_{2}^{(1)}$ and the GL test $\delta_{\mathrm{GL}}$ output the same correct $S$ with high probability.

If $\delta_{2}^{(1)}$ and $\delta_{\mathrm{GL}}$ achieve the same outcome, which means both $\delta_{2}^{(1)}$ and $\delta_{\mathrm{GL}}$ have achieved the global minimum of the cost function \eqref{eq:cost}, then running more iterations in Algorithm \ref{al:kmeans1} will not change the outcome. Thus,
\begin{equation}
  e\big(\delta^{(\ell)}_{2} \big)\le \mathbb{P}_S\big(A \cup B\big).
\end{equation}
The error exponent of running $\ell$ iterations will be lower bounded by
\begin{equation}
  \alpha\big(\delta^{(\ell)}_{2} \big) = \lim_{n\to \infty} - \frac{1}{n} \log e\big(\delta^{(\ell)}_{2} \big) \ge \lim_{n\to \infty} - \frac{1}{n} \log \Big(\mathbb{P}_S\big(A \cup B\big)\Big) = \alpha\big(\delta^{(1)}_{2}\big)\wedge \alpha \big(\delta_{\mathrm{GL}} \big).
\end{equation}
As $M \to \infty$, we know that $\alpha\big(\delta^{(1)}_{2}\big)< \alpha \big(\delta_{\mathrm{GL}} \big)$ from Theorem \ref{thm:kmeans1}, then we can conclude that
\begin{equation}
  \lim_{M\to \infty}\alpha\big(\delta^{(\ell)}_{2} \big)
  \ge \lim_{M\to \infty}\alpha\big(\delta_{\mathrm{GL}}\big) \wedge  \alpha\big(\delta_{2}^{(1)}\big) \ge \lim_{M\to \infty}\alpha\big(\delta_{2}^{(1)}\big).
\end{equation}

As for the time complexity, since each iteration has time complexity $O(M)$, $\delta^{(\ell)}_{2}$ which runs $\ell$ iterations has time complexity $O(M\ell)$.
\end{proof}

\begin{remark}
There are other works applying different distance metrics in clustering algorithm, for example, using R\'enyi divergence \cite{poczos2012nonparametric} or maximum mean discrepancy
(MMD) \cite{zou2017anamoly}. However, the choice of KL divergence as the distance metric is crucial in our theoretical analysis, and the reason is two-fold:
\begin{enumerate}
  \item For a given cluster, the cost function with KL divergence is minimized when the center is the average of all distributions in this cluster \cite{banerjee2005clustering}. This property holds for all Bregman divergences and ensures that the clustering algorithm will terminate in finite steps. We have noticed that using MMD (not a Bregman divergence) will result in oscillating between two different clustering partitions.
  \item As shown in \cite{li2014universal}, when there are a known number of identically distributed outlying sequences, the GL test was shown to be asymptotically optimal as $M$ goes to infinity. It is based on this asymptotic optimality of using KL divergence that we prove that the error exponent of our clustering-based test is not decreasing as running more iterations.
\end{enumerate}
\end{remark}
\subsection{Unknown Number of Identical Outlying Sequences}\label{subsec:5.2}
In this section, we consider the scenario where the number of outlying sequences is unknown. Moreover, the typical distributions are identical, and the outlying distributions are identical.

\begin{algorithm}
   \caption{Clustering with unknown number of outlying sequences}\label{al:kmeans2}
    \begin{algorithmic}
        \State \textbf{Input:} $M$ empirical distributions $\gamma_1$,\dots , $\gamma_M$ defined on finite alphabet $\mathcal{Y}$.
        \State \textbf{Output:} A set of outlying sequences $S$.
        \State \textbf{Initialization:}
        \State Choose one distribution $\gamma^{(0)}$ arbitrarily,
            \State $c^{(1)} \leftarrow \argmax_{\gamma_i} D(\gamma_i\|\gamma^{(0)}) $
            \State $c^{(2)} \leftarrow \gamma^{(0)}$
        \State \textbf{Method:}
        Same as in Algorithm \ref{al:kmeans} with $K=2$
    \State \textbf{Return} $C^{(1)}$ and $C^{(2)}$
    \end{algorithmic}
\end{algorithm}

Since there is no prior information on the number of outlying sequences, we apply Algorithm \ref{al:kmeans} directly. Motivated by the test in \eqref{eq:GLtest2}, we design the following initialization in Algorithm \ref{al:kmeans2} to set the cluster centers in Algorithm \ref{al:kmeans}.

It can be seen that, with high probability, $c^{(1)}$ and $c^{(2)}$ chosen by the initialization step in Algorithm \ref{al:kmeans2} are generated by $\pi$ and $\mu$, respectively.


Let $\delta_{3}$ denote the test described in Algorithm \ref{al:kmeans2}, and $\delta^{(\ell)}_{3}$ denote the test that runs $\ell$ iterations in Algorithm \ref{al:kmeans1}. The following theorem shows that the clustering-based test $\delta^{(\ell)}_{3}$, is universally exponentially consistent, and has time complexity linear in $M$.

\begin{theorem}\label{thm:kmeans2}
Consider the scenario when the number of the outlying sequences is unknown, and $1\le |S| <\frac{M}{2}$. If all the outlying
sequences are identically distributed, and all the typical sequences are identically distributed, the test $\delta^{(\ell)}_{3}$, which runs $\ell$ steps of Algorithm \ref{al:kmeans2}, is exponentially consistent, and has time complexity $O(M\ell)$.
\end{theorem}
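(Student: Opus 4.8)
The plan is to follow the same template as the proof of Theorem \ref{thm:kmeans1}, adapted to the farthest-point initialization of Algorithm \ref{al:kmeans2} and to unknown cluster sizes. Since the statement only asks for $\alpha(\delta_3^{(\ell)}) > 0$ (exponential consistency, not a sharp exponent), I would reduce everything to a single ``good event'' and invoke Sanov's theorem once. Because all distributions have full support on the finite alphabet $\mathcal{Y}$, both $D(\cdot\|\cdot)$ and the assignment/re-estimation maps are continuous on a region bounded away from the boundary of the simplex. Let $d_0 \triangleq \min\{D(\pi\|\mu), D(\mu\|\pi)\} > 0$ be the cluster separation, fix $\epsilon > 0$ small enough in terms of $d_0$ and $|\mathcal{Y}|$, and define
\[
G \triangleq \Big\{ \|\gamma_i - \pi\|_\infty < \epsilon \ \ \forall i \in S^C,\ \ \|\gamma_i - \mu\|_\infty < \epsilon \ \ \forall i \in S \Big\}.
\]

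The first claim, which is purely deterministic, is that on $G$ the test $\delta_3^{(\ell)}$ returns the correct outlying set $S$ for every $\ell \ge 1$. I would prove this in three sub-steps. (i) Initialization: whether the arbitrary $\gamma^{(0)}$ lands in the typical or the outlying cluster, on $G$ every within-cluster divergence $D(\gamma_i\|\gamma^{(0)})$ is $O(\epsilon)$ while every cross-cluster divergence is at least $d_0 - O(\epsilon) > 0$; hence $c^{(1)} = \argmax_{\gamma_i} D(\gamma_i\|\gamma^{(0)})$ must lie in the cluster opposite to $\gamma^{(0)} = c^{(2)}$, so the two seeds straddle the clusters. (ii) Assignment: with one center near $\pi$ and the other near $\mu$, each typical $\gamma_i$ satisfies $D(\gamma_i\|c^{\pi}) = O(\epsilon) < d_0 - O(\epsilon) \le D(\gamma_i\|c^{\mu})$, and symmetrically for outliers, so the nearest-center rule classifies correctly. (iii) Re-estimation and induction: if the assignment at a given step is correct, the re-estimated centers are the cluster averages $\frac{1}{|S^C|}\sum_{j\in S^C}\gamma_j$ and $\frac{1}{|S|}\sum_{i\in S}\gamma_i$, which on $G$ remain within $\epsilon$ of $\pi$ and $\mu$ by convexity of the $\epsilon$-ball; feeding these back into (ii) preserves correctness, and a short induction propagates it through all $\ell$ iterations.

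Combining the three sub-steps gives $\{\delta_3^{(\ell)} \ne S\} \subseteq G^c$, so $e(\delta_3^{(\ell)}) \le \mathbb{P}_S(G^c)$. Since $G^c$ is a finite union over the $M$ sequences of large-deviation events $\{\|\gamma_i - \theta_i\|_\infty \ge \epsilon\}$, each of probability at most $e^{-n c(\epsilon)}$ with $c(\epsilon) > 0$ by Sanov's theorem, a union bound yields $\mathbb{P}_S(G^c) \le M\, e^{-n c(\epsilon)}$, whence $\alpha(\delta_3^{(\ell)}) \ge c(\epsilon) > 0$. For complexity, the initialization costs $O(M)$ to locate the farthest point, and each $K=2$ iteration costs $O(M)$ for assignment and re-estimation, so $\ell$ iterations run in $O(M\ell)$.

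The main obstacle is the induction in sub-step (iii): I must rule out that re-estimation drifts a center out of the $\epsilon$-region and flips an assignment at a later step. The clean fix is to exploit convexity, so that any average of points in the $\epsilon$-ball around $\pi$ stays in that ball; then $G$ is self-propagating and a single fixed $\epsilon$ works uniformly over all $\ell$. A secondary subtlety is that $|S|$ is unknown and the two clusters may differ greatly in size, but since $1 \le |S| < M/2$ both clusters are nonempty on $G$, the averages in (iii) are always well defined, and the argument never invokes the value of $|S|$.
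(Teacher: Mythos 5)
Your proof is correct, but it takes a genuinely different route from the paper's. The paper bounds the error of the one-step test by a union of pairwise KL-comparison events (wrong seed in the initialization, wrong nearest center in the assignment), applies the multi-sequence Sanov lemma (Lemma \ref{lemma:sanov}) to each to obtain exponents $\alpha_3,\dots,\alpha_6$ as constrained minimizations, and shows positivity by checking that the unconstrained zero of each objective violates the corresponding constraint set; to pass from one step to $\ell$ steps it then invokes the exponential consistency of the GL test (Proposition \ref{prop:GLtest2}) together with the cost monotonicity of K-means (Proposition \ref{prop:stop}): with probability $1-e^{-\Theta(n)}$ the one-step output coincides with the GL output, hence is the global minimizer of the clustering cost, and further iterations cannot move it. You instead condition on a single uniform-concentration event $G$ and prove a purely deterministic statement: on $G$ the two seeds straddle the clusters, the nearest-center assignment is correct, and---this is the step that replaces the paper's appeal to the GL test---convexity of the $\ell_\infty$-ball keeps the re-estimated centers inside the $\epsilon$-neighborhoods of $\pi$ and $\mu$, so correctness propagates through all $\ell$ iterations by induction. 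Your argument is more self-contained (it needs neither the GL test nor cost monotonicity) and handles all $\ell$ uniformly in one pass, at the price of a generically looser exponent $c(\epsilon)$ tied to your choice of $\epsilon$ (which, to keep the divergences Lipschitz, must be small relative to $\min_y\pi(y)$ and $\min_y\mu(y)$ as well as to $d_0$---a dependence you should state explicitly, though it only enters the analysis and not the test, so universality is preserved); the paper's exponents are instead attached directly to the decision events and are in general sharper. Since the theorem only asserts positivity of the exponent, both arguments suffice, and the $O(M\ell)$ complexity accounting is identical.
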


\begin{proof}[Proof sketch]
The exponential consistency of $\delta_{3}^{(\ell)}$ can be established using similar techniques to those in Theorem \ref{thm:kmeans1} and Theorem \ref{thm:moresteps}.
The major difference between the proof of Theorem \ref{thm:kmeans1} and Theorem \ref{thm:kmeans2} is that there are two cluster centers in the initialization step and assignment step in Algorithm \ref{al:kmeans2}.  The details can be found in Appendix \ref{appx:thm3}.
\end{proof}



\subsection{Typical and Outlying Distributions Forming Clusters}\label{subsec:5.3}

In this subsection, we consider the scenario that both the outlying distributions $\{\mu_i\}_{i\in S}$ and the typical distributions $\{\pi_j\}_{j \in S^C}$ are not identically distributed. Moreover, the typical distributions and the outlying distributions form clusters as defined in \eqref{eq:clustercondition}, which means that the divergence within the same cluster is always less than the divergence between different clusters.

The following theorem shows that under the condition \eqref{eq:clustercondition}, the one step test $\delta_{3}^{(1)}$ proposed in  Algorithm \ref{al:kmeans2} is universally exponentially consistent, and has time complexity that is linear in $M$.

\begin{theorem}\label{thm:clusters}
For each $M \ge 3$, when both the outlying distributions $\{\mu_i\}_{i\in S}$ and the typical distributions $\{\pi_j\}_{j \in S^C}$ form clusters, i.e. condition \eqref{eq:clustercondition} holds, the test $\delta_{3}^{(1)}$, which runs one step of Algorithm \ref{al:kmeans2}, is universally exponentially consistent, and has time complexity $O(M)$.
\end{theorem}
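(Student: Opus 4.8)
The plan is to reproduce the large-deviations argument behind Theorem \ref{thm:kmeans1}, but now tracking \emph{both} cluster centers at once and invoking the separation condition \eqref{eq:clustercondition} in place of the single-typical-distribution assumption. Since $|S|<M/2$ forces $|S^C|>|S|$, the test reports the smaller of the two returned clusters as $S$; hence it suffices to show that, with probability tending to one exponentially fast, the one-step partition $\{C^{(1)},C^{(2)}\}$ coincides with $\{S,S^C\}$ (the re-estimation inside a single iteration does not alter the returned partition, so only the initialization and the single assignment sweep need to be analyzed). I would split the total error into an \emph{initialization error}, in which the two centers $c^{(1)},c^{(2)}$ fail to land in distinct clusters, and an \emph{assignment error}, in which some $\gamma_i$ is placed with the wrong center, and then bound each by an exponentially decaying probability via Sanov's theorem.

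For the initialization, condition on the cluster membership of the arbitrarily chosen $\gamma^{(0)}$, generated by some $\nu_0\in\{\pi_{j_0},\mu_{i_0}\}$. Since $c^{(2)}=\gamma^{(0)}$, the center $c^{(2)}$ sits automatically in $\nu_0$'s cluster, and the far point $c^{(1)}=\argmax_i D(\gamma_i\|\gamma^{(0)})$ lands in the opposite cluster unless some same-cluster empirical distribution is farther from $\gamma^{(0)}$ than every opposite-cluster one. At the generating distributions the within-cluster divergence $D(\nu_b\|\nu_0)$ lies strictly below the between-cluster divergence $D(\nu_a\|\nu_0)$ by \eqref{eq:clustercondition}, whose right-hand minimum carries both orientations of the KL divergence and therefore supplies the needed margin regardless of whether $\nu_0$ is typical or outlying. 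Consequently the bad region in the space of joint types is closed and bounded away from the generating distributions, so by Sanov's theorem \cite{cover2012elements} and a union bound over the polynomially many relevant index configurations the initialization error decays as $e^{-n\rho_1}$ with $\rho_1>0$.

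Assuming the initialization succeeds, $c^{(1)}$ and $c^{(2)}$ occupy the two different clusters, and a typical $\gamma_i$ is misassigned only if $D(\gamma_i\|c^{(\mathrm{out})})<D(\gamma_i\|c^{(\mathrm{typ})})$, while an outlying $\gamma_i$ is misassigned only if the reverse holds. At the true distributions both of these inequalities are ruled out by \eqref{eq:clustercondition}, again using both orientations of the divergence in the separation bound. Expressing each misassignment event through pairwise comparisons of the empirical distributions $\gamma_i$, $\gamma^{(0)}$, and the far point, I would apply Sanov's theorem to each of the $O(M)$ candidate comparisons and take a union bound to obtain an $e^{-n\rho_2}$ bound with $\rho_2>0$. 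Combining the two stages gives $e(\delta_3^{(1)})\le e^{-n\rho_1}+e^{-n\rho_2}$ up to polynomial-in-$M$ factors, hence a strictly positive error exponent and universal exponential consistency. The $O(M)$ complexity is immediate: forming the $M$ divergences to $\gamma^{(0)}$, selecting the maximizer, running one assignment sweep, and returning the smaller cluster are each linear in $M$.

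The main obstacle is handling the asymmetry of the KL divergence consistently across the two uses of the algorithm: the initialization ranks points by $D(\gamma_i\|\gamma^{(0)})$ whereas the assignment ranks by $D(\gamma_i\|c^{(k)})$, so the argument must invoke the correct orientation of \eqref{eq:clustercondition} in each case and verify that it yields a strict margin for every typical-versus-outlying comparison that can arise once the roles of $c^{(1)}$ and $c^{(2)}$ are fixed by the cluster of $\gamma^{(0)}$. A secondary technical point is that the centers used in the assignment are themselves random empirical distributions rather than deterministic, so the Sanov bounds must be phrased in terms of the joint type of the few sequences participating in each comparison, exactly as in the treatment of the events $E_1$ and $E_2$ in Theorem \ref{thm:kmeans1}.
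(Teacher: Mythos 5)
Your proposal follows essentially the same route as the paper's own proof: the same decomposition of the error into an initialization event (both centers landing in the same cluster) and an assignment event (a sequence paired with the wrong center), the same reduction of each to pairwise comparisons of empirical KL divergences bounded via Sanov's theorem and a polynomial union bound, and the same use of both orientations of condition \eqref{eq:clustercondition} to verify that the resulting rate functions are strictly positive because the generating distributions lie outside the constraint sets. No gaps; the argument is correct.
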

\begin{proof}[Proof sketch]
The exponential consistency of $\delta^{(1)}_{3}$ can be established using techniques similar to those in Theorem \ref{thm:kmeans2}. The details can be found in Appendix \ref{appx:clusters}.
\end{proof}

The GL approach of replacing the true distribution in \eqref{eq:likelihood} by their MLEs leads to identical likelihood estimates under each hypothesis. Thus, the GL approach is not applicable here. One could apply the test in \eqref{eq:GLtest2} to this problem, but the following example shows that the test in \eqref{eq:GLtest2} is not universally exponentially consistent, even if condition \eqref{eq:clustercondition} holds.

%
\begin{example}
As shown in \cite{li2014universal}, the error exponent of the GL test in \eqref{eq:GLtest2} is established by showing the following optimization problem has a positive value
\begin{align} \label{eq:exp_GL}
 \min_{q_1,q_2,\dots,q_M \in C_{(S,S')}} &\sum_{i\in S}D(q_i\|\mu_i)+\sum_{j\in S^C}D(q_j\|\pi_j), \nn \\
   \mbox{where }\quad \quad C_{(S,S')} = \Bigg\{(q_1,q_2,\dots,q_M): & \sum_{i\in S}D\Big(q_i\|\frac{\sum_{k \in S} q_k}{|S|}\Big) + \sum_{j\in S^C}D\Big(q_j\|\frac{\sum_{k \in S^C} q_k}{M-|S|}\Big) \nn \\
   &\ge\sum_{i\in S'}D\Big(q_i\|\frac{\sum_{k \in S'} q_k}{|S'|}\Big) + \sum_{j\notin S'}D\Big(q_j\|\frac{\sum_{k \notin S'} q_k}{M-|S'|}\Big) \Bigg\}.
\end{align}

We consider the scenario where $M =1000$, $S=\{1,2\}$, the typical and outlying distributions are specified as follows:
\begin{align}
\mu_1&=\left(\frac{1}{4},\frac{1}{2},\frac{1}{4}\right),\quad\quad \mu_2=\left(\frac{1}{5},\frac{7}{15},\frac{1}{3}\right),\nn \\
\pi_3&=\left(\frac{1}{3},\frac{1}{3},\frac{1}{3}\right),\quad\quad
\pi_4=\pi_5=\cdots=\pi_{1000}=\left(\frac{247}{500},\frac{32}{125},\frac{1}{4}\right).
\end{align}
It can be verified the clustering condition \eqref{eq:clustercondition} holds for this example. However, if we let
$q_1=\mu_1$, $q_2=\mu_2$, $q_3=\pi_3$, $q_4=\cdots=q_{1000}=\pi_4$, $S=\{1,2\}$ and $S'=\{1,2,3\}$,
then
\begin{equation}
\sum_{i\in S}D\Big(q_i\|\frac{\sum_{k \in S} q_k}{|S|}\Big) + \sum_{j\in S^C}D\Big(q_j\|\frac{\sum_{k \in S^C} q_k}{M-|S|}\Big)
   \ge\sum_{i\in S'}D\Big(q_i\|\frac{\sum_{k \in S'} q_k}{|S'|}\Big) + \sum_{j\notin S'}D\Big(q_j\|\frac{\sum_{k \notin S'} q_k}{M-|S'|}\Big)
\end{equation}
also holds, i.e., $(q_1,q_2,\dots,q_M)\in C_{S,S'}$, which means the error exponent in \eqref{eq:exp_GL} is equal to zero. Thus, the test in \eqref{eq:GLtest2} is not universally exponentially consistent for the scenario where both typical and outlying distributions form clusters.
\end{example}

\section{Numerical Results}\label{sec:num}
In this section, we compare the performance of the GL test $\delta_{\mathrm{GL}}$, the clustering-based tests $\delta_{2}$, $\delta_{3}$ (run until convergence) and the one step tests $\delta_{2}^{(1)}$, $\delta_{3}^{(1)}$.

For the scenario with identical typical distribution, we set $\pi$ to be the uniform distribution with alphabet size 10, and generate outlying distributions randomly.

We first simulate the scenario with distinct outlying distributions where $T$ is known. We choose $M=20$, $T=3$. In Figure \ref{fig:kmeans1}, we plot the logarithm of the error probability as a function of $n$ for $\delta_{\mathrm{GL}}$, $\delta_{2}$ and $\delta_{2}^{(1)}$ after 5000 Monte Carlo simulation steps. As we can see from Figure \ref{fig:kmeans1}, $\delta_{2}^{(1)}$ and $\delta_{2}$ are both exponentially consistent, and $\delta_{2}$ outperforms $\delta_{2}^{(1)}$ as shown in Theorem \ref{thm:moresteps}. A comparison of $\delta_{2}$ and $\delta_{\mathrm{GL}}$ shows that they have close performance, but $\delta_{2}$ is about 50 times faster than $\delta_{\mathrm{GL}}$.

We then simulate the scenario with unknown number of identical outlying distributions, setting $M=100$, $T=10$. Figure \ref{fig:kmeans2} shows that $\delta_{3}$ outperforms $\delta_{3}^{(1)}$. We note that running the clustering-based tests for 5000 Monte Carlo steps takes 5 minutes on a 3.6 GHz i7 CPU. However, the GL test is not feasible here, since the number of hypotheses one needs to search over is exponential in $M$.

For the scenario where both the typical and outlying distributions form clusters, we set the alphabet size to be 10. And we choose the uniform distribution as the center of the typical cluster. We generate the typical distributions by adding some Gaussian noise to the cluster center, and then normalizing them.  The cluster of the outlying distributions are generated in the same way, but with a randomly chosen cluster center. We set $M=100$, $T=10$. The dotted lines in Figure \ref{fig:cluster} correspond to the scenario where the typical distributions are identical and the outlying distributions are identical, and equal to the corresponding cluster center. The solid lines correspond to the scenario where both the typical and outlying distributions are generated by the approach mentioned above, which form clusters. Figure \ref{fig:cluster} shows that the tests $\delta_{3}^{(1)}$ and $\delta_{3}$ are exponentially consistent, and that $\delta_{3}$ outperforms $\delta_{3}^{(1)}$ for both scenarios.


We further study the number of iterations that $\delta_{3}$ takes to converge. Figure \ref{fig:steps} plots the average number of steps of test $\delta_{3}$ verses the number of samples, using the same setting as in Figure \ref{fig:kmeans2}. It is seen that the more the samples collected, the fewer the iterations needed. Moreover, when the number of samples $n$ goes to infinity, $\delta_{3}$ converges in just one step, which explains the exponential consistency of the one step test $\delta_{3}^{(1)}$.

Moreover, we study how the number of iterations that $\delta_{3}$ takes to converge changes as a function of the number of sequences $M$. We simulate the scenario where the number of outlying sequences $T$ changes with the number of sequences $M$ in Figure \ref{fig:change_M}. We use the same setting as in Figure \ref{fig:kmeans2}, but here we set $T=M/5$, where $M$ ranges from 40 to 200, and set $n=400$ to be fixed. Both Figure \ref{fig:steps} and Figure \ref{fig:change_M} show that in general our clustering-based test $\delta_{3}$ converges in very few steps.

\begin{figure}[!htp]
  \centering
  \includegraphics[width=8.2cm]{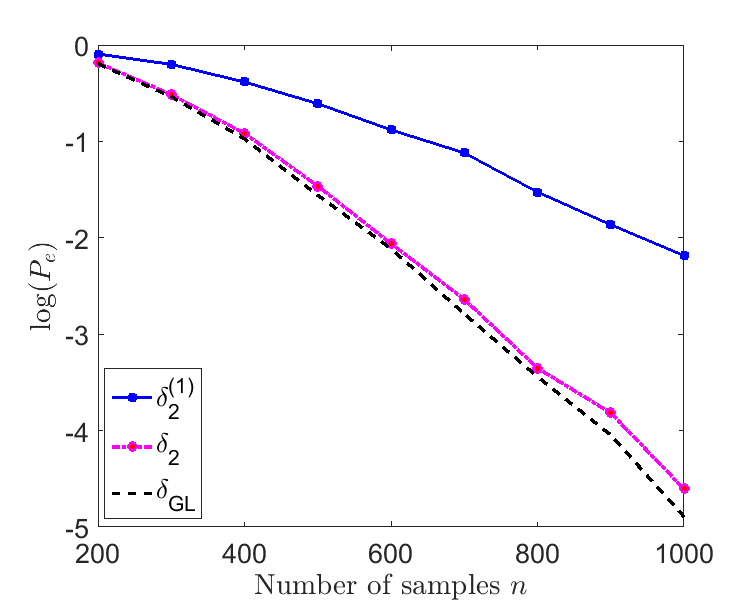}\\
  \caption{Comparison of tests  $\delta_{2}^{(1)}$, $\delta_{2}$, $\delta_{\mathrm{GL}}$ with known number of distinct outlying distributions.}\label{fig:kmeans1}
\end{figure}

\begin{figure}[!htp]
  \centering
  \includegraphics[width=8.2cm]{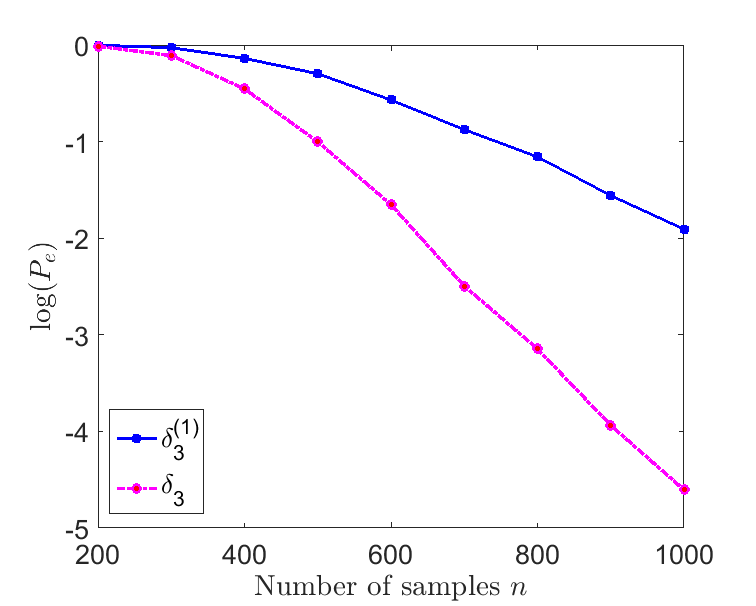}\\
  \caption{Comparison of tests $\delta_{3}$, $\delta_{3}^{(1)}$ with unknown number of identical outlying distributions.}\label{fig:kmeans2}
\end{figure}

\begin{figure}[!htp]
  \centering
  \includegraphics[width=8.2cm]{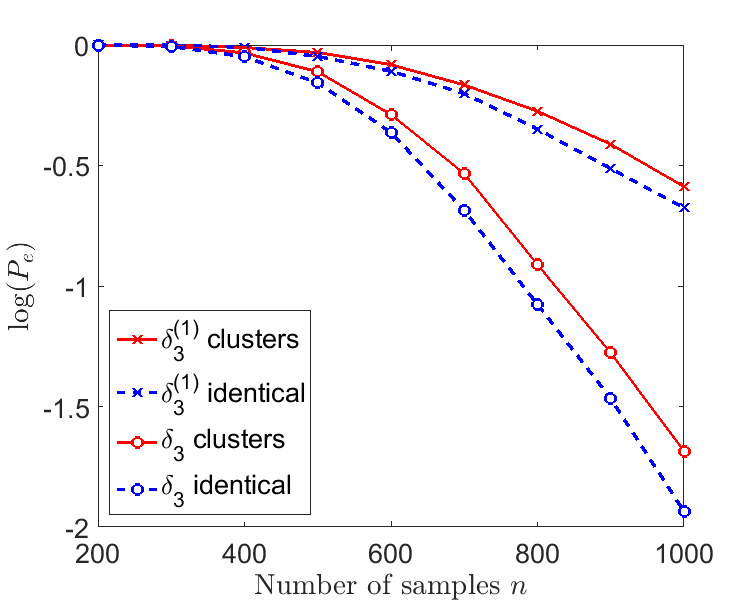}\\
  \caption{Comparison of tests $\delta_{3}$, $\delta_{3}^{(1)}$ when typical distributions and outlying distributions forming clusters respectively.}\label{fig:cluster}
\end{figure}

\begin{figure}[!htp]
  \centering
  \includegraphics[width=8.2cm]{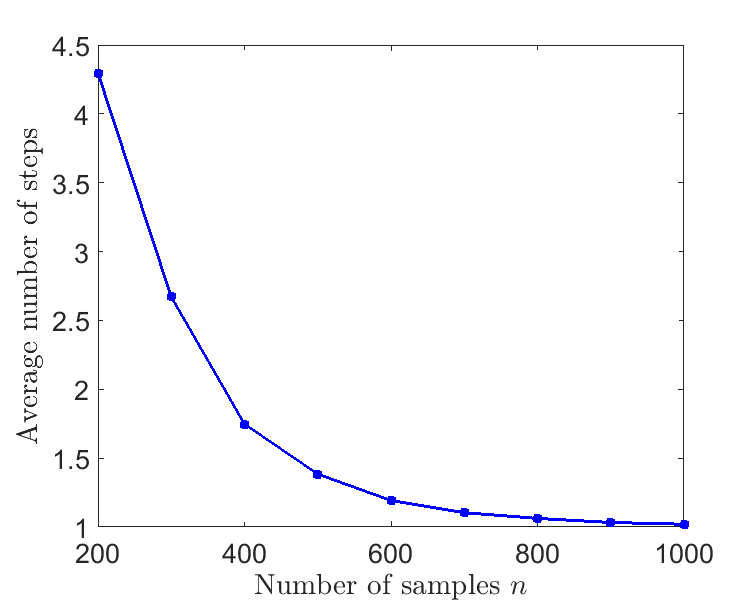}\\
  \caption{Average number of steps for convergence of test $\delta_{3}$ versus number of samples $n$.}\label{fig:steps}
\end{figure}

\begin{figure}[!htp]
  \centering
  \includegraphics[width=8.2cm]{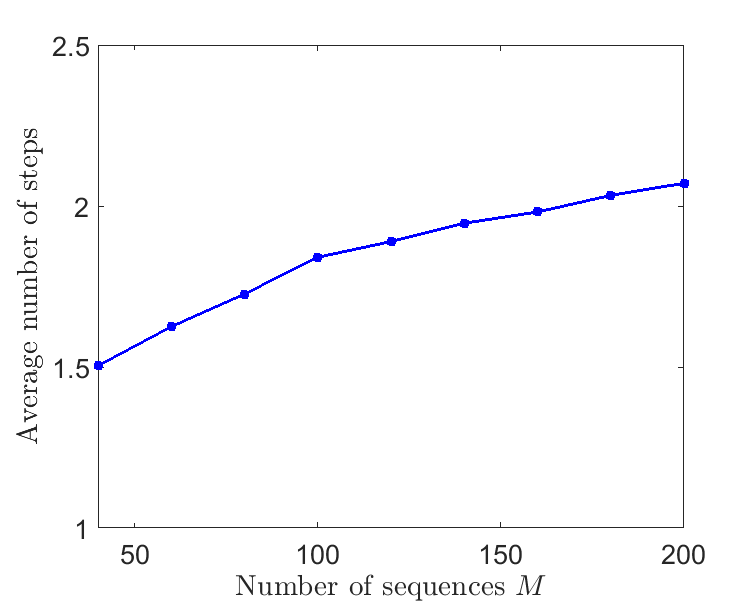}\\
  \vspace{-0.2cm}
  \caption{Average number of steps for convergence of test $\delta_{3}$ versus number of sequences $M$.}\label{fig:change_M}
  \vspace{-0.4cm}
\end{figure}

\section{Conclusion}\label{sec:con}

In this paper, we have investigated the universal outlying sequence detection problem. We have constructed clustering-based tests that are exponentially consistent and have time complexity linear in $M$ for various scenarios. For the scenario where GL test is asymptotically optimal, we have shown that running more steps of the clustering-based test does not decrease the error exponent. We have further  shown that the clustering-based tests are applicable to more general scenarios. For example, when both the typical and outlying distributions form clusters, the clustering-based test is exponentially consistent, but the GL test is not even applicable. We have provided numerical results to demonstrate that our clustering-based test can achieve a similar error exponent as the GL test.

Our study of this problem provides a new viewpoint to establish the exponential consistency of clustering algorithms, and we believe that this approach can be applied to universal outlying sequence detection with continuous distributions and also to other nonparametric problems.

\newpage
\appendix
\noindent{\Large {\textbf{Appendix}}}
\section{Useful Lemmas}


\begin{lemma}\label{lemma:sanov}\cite[Lemma 1]{li2014universal}
Let ${Y}^{(1)}, \dots , {Y}^{(J)}$ be mutually independent random vectors with each ${Y}^{(j)}$, $j = 1, \dots , J$ , being $n$ i.i.d. samples of a random variable distributed according to $p_j \in \mathcal{P(Y)} $. Let $A_n$ be the set of all $J$ tuples $(\bm{y}^{(1)}, \dots , \bm{y}^{(J)}) \in \mathcal{Y}^{Jn}$ whose empirical distributions
$(\gamma_1, \dots ,\gamma_J )$ lie in a closed set $E \in \mathcal{P (Y)}^J$ . Then, it holds that
\begin{equation}
  \lim_{n \to \infty} -\frac{1}{n} \log \mathbb{P}\left\{\big( {Y}^{(1)}, \dots , {Y}^{(J)} \big)\in A_n \right\} = \min_{(q_1,\dots,q_J) \in E}\ \sum_{j=1}^J D(q_j \|p_j).
\end{equation}
\end{lemma}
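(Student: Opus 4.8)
The plan is to establish this as a joint (multivariate) version of Sanov's theorem via the method of types, exploiting the mutual independence of the $J$ sequences. Throughout, write $d^\ast = \min_{(q_1,\dots,q_J)\in E}\sum_{j=1}^J D(q_j\|p_j)$; this minimum is attained because $E$ is a closed, hence compact, subset of the product of probability simplices $\mathcal{P(Y)}^J$, and because each $p_j$ has full support the map $(q_1,\dots,q_J)\mapsto \sum_j D(q_j\|p_j)$ is finite and continuous on that compact set. The two standard method-of-types facts I would invoke are: (i) for a single length-$n$ sequence drawn i.i.d.\ from $p_j$, every type (empirical distribution with denominator $n$) $q_j$ satisfies
\begin{equation*}
(n+1)^{-|\mathcal{Y}|}\,e^{-nD(q_j\|p_j)}\ \le\ \mathbb{P}_{p_j}\{\text{type}(\bm{y}^{(j)})=q_j\}\ \le\ e^{-nD(q_j\|p_j)};
\end{equation*}
and (ii) the number of types of denominator $n$ over $\mathcal{Y}$ is at most $(n+1)^{|\mathcal{Y}|}$, which is subexponential in $n$.

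For the lower bound on the exponent, I would first write the event probability as a finite sum over type-tuples lying in $E$, and factor it using independence:
\begin{equation*}
\mathbb{P}\{(\gamma_1,\dots,\gamma_J)\in E\}=\sum_{(q_1,\dots,q_J)\in E}\ \prod_{j=1}^J \mathbb{P}_{p_j}\{\text{type}(\bm{y}^{(j)})=q_j\},
\end{equation*}
the sum being over those tuples whose coordinates are all types of denominator $n$. Bounding each factor by the upper estimate in (i) makes each product at most $e^{-n\sum_j D(q_j\|p_j)}\le e^{-nd^\ast}$, and there are at most $(n+1)^{J|\mathcal{Y}|}$ such tuples. Hence $\mathbb{P}\{(\gamma_1,\dots,\gamma_J)\in E\}\le (n+1)^{J|\mathcal{Y}|}e^{-nd^\ast}$, so taking $-\tfrac1n\log$ and letting the polynomial factor wash out yields $\liminf_n -\tfrac1n\log\mathbb{P}\ge d^\ast$.

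For the matching upper bound on the exponent, I would pick a minimizer $(q_1^\ast,\dots,q_J^\ast)\in E$ and approximate it by a sequence of type-tuples $(q_1^{(n)},\dots,q_J^{(n)})\in E$ with $q_j^{(n)}\to q_j^\ast$ as $n\to\infty$; retaining only this single tuple and using the lower estimate in (i) gives
\begin{equation*}
\mathbb{P}\{(\gamma_1,\dots,\gamma_J)\in E\}\ \ge\ (n+1)^{-J|\mathcal{Y}|}\,e^{-n\sum_j D(q_j^{(n)}\|p_j)}.
\end{equation*}
By continuity of $D(\cdot\|p_j)$ (again using full support of $p_j$), $\sum_j D(q_j^{(n)}\|p_j)\to d^\ast$, so $-\tfrac1n\log\mathbb{P}\le d^\ast+o(1)$ and $\limsup_n -\tfrac1n\log\mathbb{P}\le d^\ast$. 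Combining the two bounds gives the claimed limit.

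The main obstacle is precisely the approximation step in the upper bound: one must produce type-tuples that both converge to the minimizer \emph{and} remain inside $E$. When the minimizer lies on the boundary of $E$ this is not automatic, and it is exactly here that the regularity hypothesis implicit in this form of the result — that $E$ agrees with the closure of its interior, as in the statement of Sanov's theorem in \cite{cover2012elements} — is needed, so that minimizers can be reached from within $E$ by types of denominator $n$ (which are dense in $\mathcal{P(Y)}$). Everything else is routine bookkeeping with the polynomial type-counting factors, which are subexponential and therefore do not affect the exponent.
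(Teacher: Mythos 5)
The paper does not prove this lemma at all: it is imported verbatim, with a citation, from Lemma~1 of \cite{li2014universal} (itself a multivariate restatement of Sanov's theorem), so there is no in-paper argument to compare against. Your method-of-types proof is the standard and correct way to establish it: the factorization of the event probability over type-tuples using mutual independence, the two-sided type-class probability bounds, and the polynomial type-counting factor $(n+1)^{J|\mathcal{Y}|}$ give the lower bound on the exponent cleanly, and the single-tuple lower bound on the probability gives the matching upper bound. You are also right to flag the one genuine subtlety: as literally stated for an arbitrary closed $E$, the exact limit can fail (e.g.\ if $E$ is a singleton that is never a type-tuple, the probability is identically zero), and the equality requires that the minimizer be approachable by type-tuples lying inside $E$ --- equivalently, the usual regularity condition that $E$ agree with the closure of its interior, as in the Cover--Thomas formulation. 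This hypothesis is implicit in \cite{li2014universal} and is satisfied by every set $E$ to which the lemma is applied in this paper (all are defined by non-strict divergence inequalities such as $D(q_1\|q_2)\le D(q_3\|q_2)$, which are closures of their interiors), so your proof, with that caveat made explicit, is complete.
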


\begin{lemma}\label{lemma:Bhatta}\cite[Lemma 2]{li2014universal}
For any two pmfs $p_1, p_2 \in \mathcal{P(Y)}$ with full supports, it holds that
\begin{equation}
  2B(p_1,p_2) = \min_{q \in \mathcal{P(Y)}} \Big(D(q\|p_1)+D(q\|p_2) \Big).
\end{equation}
In particular, the minimum on the right side is achieved by
\begin{equation}
  q^* = \frac{p_1^{1/2}(y)p_2^{1/2}(y)}{\sum_{y \in \mathcal{Y}}p_1^{1/2}(y)p_2^{1/2}(y)}, \ y\in \mathcal{Y}.
\end{equation}
\end{lemma}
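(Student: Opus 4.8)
The plan is to establish the identity by a completion-of-square argument for relative entropy: I will rewrite the objective $f(q) = D(q\|p_1) + D(q\|p_2)$ as a single KL divergence of $q$ from the candidate minimizer $q^*$ plus an additive constant, so that nonnegativity of KL divergence simultaneously pins down the minimal value and its unique minimizer, with no optimization machinery required.

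First I would merge the two divergences into one sum. Because both $p_1$ and $p_2$ have full support, every summand is finite, and combining logarithms gives
\begin{equation*}
D(q\|p_1)+D(q\|p_2) = \sum_{y\in\mathcal{Y}} q(y)\log\frac{q(y)^2}{p_1(y)p_2(y)}.
\end{equation*}
The key observation is that the denominator features the geometric mean $\sqrt{p_1(y)p_2(y)}$, which up to normalization is exactly the claimed minimizer. I would therefore introduce the normalization constant $Z \triangleq \sum_{y\in\mathcal{Y}}\sqrt{p_1(y)p_2(y)}$, so that $q^*(y)=\sqrt{p_1(y)p_2(y)}/Z$ is a well-defined strictly positive pmf, and record that $-\log Z = B(p_1,p_2)$ directly from the definition of the Bhattacharyya distance.

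Next I would substitute $p_1(y)p_2(y)=Z^2\,q^*(y)^2$ into the merged expression. A short manipulation yields
\begin{equation*}
D(q\|p_1)+D(q\|p_2) = 2\sum_{y\in\mathcal{Y}} q(y)\log\frac{q(y)}{q^*(y)} - 2\log Z = 2D(q\|q^*) + 2B(p_1,p_2).
\end{equation*}
Since $D(q\|q^*)\ge 0$ for every $q\in\mathcal{P(Y)}$, with equality if and only if $q=q^*$, the right-hand side is minimized precisely at $q=q^*$, and the minimal value equals $2B(p_1,p_2)$. This proves both assertions of the lemma at once.

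I do not expect a substantive obstacle here: the only care needed is the bookkeeping with $Z$ and the invocation of the full-support hypothesis, which guarantees that $q^*$ is a strictly positive pmf, that $Z>0$, and that every logarithm is finite so that the equality condition for $D(q\|q^*)=0$ is clean. As an alternative one could note that $f$ is strictly convex on the simplex and solve the first-order (Lagrange-multiplier) conditions, but the completion-of-square route is shorter and delivers the explicit minimizer $q^*$ without any optimization argument.
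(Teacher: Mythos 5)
Your proof is correct: the identity $D(q\|p_1)+D(q\|p_2)=2D(q\|q^*)+2B(p_1,p_2)$ follows exactly as you compute, the full-support hypothesis keeps every term finite and $Z>0$, and nonnegativity of $D(q\|q^*)$ with equality iff $q=q^*$ delivers both the minimal value and the minimizer at once. The paper itself gives no proof of this lemma --- it is imported by citation from \cite[Lemma 2]{li2014universal} --- and your completion-of-square argument is precisely the standard derivation used there, so there is nothing to flag.
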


\section{Proof of Theorem \ref{thm:kmeans1}} \label{appx:thm1}

Due to the structure of the test we know that errors may occur at the following two  steps:
\begin{enumerate}
  \item \textbf{Initialization Step:} The constructed cluster center for typical sequences $\hat{\pi}$ is actually generated from an outlying distribution.
  \item \textbf{Assignment Step:} Given that the cluster center $\hat{\pi}$ is actually generated from typical distribution $\pi$,  the empirical distribution of an outlying sequence is  closer to $\hat{\pi}$.
\end{enumerate}

We use $E$ to denote the event that errors occur in the initialization step. We use $\gamma^*$ as the cluster center $\hat{\pi}$, where $D(\gamma^*\|\gamma^{(0)})$ is the $\lceil \frac{M}{2} \rceil$-th element among all $D(\gamma_i\|\gamma^{(0)})$. It is difficult to write the explicit form of the event $E$. However, we can find upper bounds for the probability of $E$ for the following two scenarios.

If $\gamma^{(0)}$ is generated from the typical distribution $\pi$, an error occurs when $\hat{\pi}$ is actually generated from an outlying distribution, then $D(\gamma_i\|\gamma^{(0)})\le D(\gamma_{j}\|\gamma^{(0)})$ must hold for some $i \in S, j\in S^C$. Due to the arbitrariness of $\gamma^{(0)}$, the probability of this error event can be upper bounded by the probability of the following event:
\begin{equation}\label{eq:E1}
E_1 \triangleq \Big\{ D(\gamma_i\|\gamma_{j_1})\le D(\gamma_{j_2}\|\gamma_{j_1}),\ \exists\ i \in S,\  \exists\  j_1,j_2 \in S^C \Big\}.
\end{equation}

If $\gamma^{(0)}$ is generated from an outlying distribution, the error probability can be upper bounded by the probability of the following event:
\begin{equation}\label{eq:E2}
E_2 \triangleq \Big\{ D(\gamma_{j_1}\|\gamma_{i_1})<D(\gamma_{i_2}\|\gamma_{i_1})<D(\gamma_{j_2}\|\gamma_{i_1}),\ \exists\ i_1,i_2 \in S,\  \exists\  j_1,j_2 \in S^C \Big\}.
\end{equation}

Thus, $\mathbb{P}_S(E)\le \mathbb{P}_S(E_1)+\mathbb{P}_S(E_2)$.

We then use $F$ to denote the event that errors occur at the assignment step, then
\begin{equation}
  F \triangleq E^C\bigcap \Big\{ D(\gamma_i\|\hat{\pi})\le D(\gamma_{j}\|\hat{\pi}),\ \exists\ i\in S,\ \exists\ j\in S^C \Big\} .
\end{equation}
Note that $F\subset E_1$, then the probability of error event $F$ can be upper bounded by that of the event $E_1$.

The error probability of the test $\delta_{2}^{(1)}$ can be bounded by

\begin{equation}\label{eq:thm1_proof}
{\mathbb{P}_S( F) \le} e\big(\delta^{(1)}_{2} \big)  = \mathbb{P}_S\big(E\cup F\big) \le \mathbb{P}_S(E) +\mathbb{P}_S(F).
\end{equation}

The right hand side of \eqref{eq:thm1_proof} can be further bounded by
\begin{align}\label{eq:thm1_upper}
  & \mathbb{P}_S(E) +\mathbb{P}_S(F)\nn \\
  \le & \mathbb{P}_S(E_1)+\mathbb{P}_S(E_2)+ \mathbb{P}_S(E_1)\nn \\
  \le &   2 \mathbb{P}_S \Bigg( \bigcup_{\substack{j_1,j_2 \in S^C \\ i\in S}} \{D(\gamma_i\|\gamma_{j_1})\le D(\gamma_{j_2}\|\gamma_{j_1})\}\Bigg) \nn\\
   &\quad +  \mathbb{P}_S \Bigg(\bigcup_{\substack{j_1,j_2 \in S^C \\ i_1,i_2 \in S}}\{D(\gamma_{j_1}\|\gamma_{i_1})<D(\gamma_{i_2}\|\gamma_{i_1})<D(\gamma_{j_2}\|\gamma_{i_1})\}\Bigg)\nn \\
  \overset{(a)}{\le}& 2(M-|S|)^2|S|  \max_{i\in S} \mathbb{P}_S \Big(D(\gamma_i\|\gamma_{j_1})\le D(\gamma_{j_2}\|\gamma_{j_1})\Big) \nn\\
   &\quad + (M-|S|)^2|S|^2\max_{i_1,i_2\in S} \mathbb{P}_S \Big(D(\gamma_{j_1}\|\gamma_{i_1})<D(\gamma_{i_2}\|\gamma_{i_1})<D(\gamma_{j_2}\|\gamma_{i_1})\Big)\nn \\
  \le & (M-|S|)^2|S|^2 \bigg(2 \max_{i\in S} \mathbb{P}_S \Big(D(\gamma_i\|\gamma_{j_1})\le D(\gamma_{j_2}\|\gamma_{j_1})\Big) \nn\\
   &\quad + \max_{i_1,i_2\in S} \mathbb{P}_S \Big(D(\gamma_{j_1}\|\gamma_{i_1})<D(\gamma_{i_2}\|\gamma_{i_1})
  <D(\gamma_{j_2}\|\gamma_{i_1})\Big)\bigg),
\end{align}
where the union bound (a) holds for all $j_1,j_2 \in S^C$, since all typical sequences are generated from the same distribution $\pi$.

As for the left hand side of \eqref{eq:thm1_proof}, we have
\begin{align}\label{eq:thm1_lower}
 \mathbb{P}_S(F) \ge \max_{i\in S} \mathbb{P}_S\Big( D(\gamma_i\|\gamma_{j_1})\le D(\gamma_{j_2}\|\gamma_{j_1}) \Big),
\end{align}
for all $j_1, j_2 \in S^C$.

From Lemma \ref{lemma:sanov}, we know the exponent can be computed as
\begin{align} \label{eq:exp_1}
  \alpha_1  & \triangleq\lim_{n \to \infty} -\frac{1}{n} \log\ \max_{i \in S}\mathbb{P}_S \Big(D(\gamma_i\|\gamma_{j_1})\le D(\gamma_{j_2}\|\gamma_{j_1})\Big) \nn\\
& = \min_{\substack{q_1,q_2,q_3 \in C_ 1\\ i\in S}} D(q_1\|\mu_i)+D(q_2\|\pi) +D(q_3\|\pi),
\end{align}
where $C_1 \triangleq \big\{(q_1,q_2,q_3): D(q_1\|q_2)\le D(q_3\|q_2) \big\}$, and
\begin{align} \label{eq:exp_2}
  \alpha_2 & \triangleq\lim_{n \to \infty} -\frac{1}{n} \log\ \max_{i_1,i_2\in S} \mathbb{P}_S \Big(D(\gamma_{j_1}\|\gamma_{i_1})<D( \gamma_{i_2}\|\gamma_{i_1})
  <D(\gamma_{j_2}\|\gamma_{i_1})\Big) \nn \\
  &=\min_{\substack{q_1,q_2,q_3,q_4 \in C_ 2 \\  i_1,i_2 \in S}} D(q_1\|\pi)+D(q_2\|\pi) +D(q_3\|\mu_{i_1})+D(q_4\|\mu_{i_2}),
\end{align}
where $C_{2} \triangleq \big\{(q_1,q_2,q_3,q_4):D(q_1\|q_3)<D(q_4\|q_3) <D(q_2\|q_3)\big\}$.

It can be verified that the objective function in $\eqref{eq:exp_1}$ can
only be zero for the case $q_1=\mu_i$, $q_2 = q_3=\pi$,
which are not in the constraint set $C_1$. The objective function in $\eqref{eq:exp_2}$ can only be zero when  $q_1=q_2=\pi$, $q_3 =\mu_{i_1}, q_4=\mu_{i_2}$, which cannot meet the constraint in set $C_2$ either. Thus, we can conclude that $\alpha_1,\ \alpha_2 >0$.
From the fact that $\lim_{n \to \infty} \frac{\log M(M-|S|)}{n} =0$, and combining with \eqref{eq:thm1_upper} and \eqref{eq:thm1_lower}, we get that
\begin{equation}
   \alpha_1 \ge  \alpha\big(\delta^{(1)}_{2} \big) = \lim_{n \to \infty} -\frac{1}{n} \log e\big(\delta^{(1)}_{2} \big)  \ge \alpha_1\wedge \alpha_2.
\end{equation}
This result shows that the one step test $\delta_{2}^{(1)}$ is universally exponentially consistent.
We note that
\begin{equation}
  \alpha_1 = \min_{\substack{q_1,q_2,q_3 \in C_ 1\\ i\in S}} D(q_1\|\mu_i)+D(q_2\|\pi) +D(q_3\|\pi),
\end{equation}
where $C_1 = \big\{D(q_1\|q_2)\le D(q_3\|q_2) \big\}$. If we add the constrain that $q_1=q_3$, i.e., $C_1' = \big\{D(q_1\|q_2)\le D(q_3\|q_2),q_1=q_3=q \big\}$, then $C_1' \subset C_1$, and thus
\begin{align}
   \alpha_1 &  < \min_{\substack{q_1,q_2,q_3 \in C_ 1' \\ i\in S} } D(q_1\|\mu_i)+D(q_2\|\pi) +D(q_3\|\pi)\nn\\
   & = \min_{\substack{q,q_2 \in C_ 1'\\ i\in S}} D(q\|\mu_i)+D(q_2\|\pi) +D(q\|\pi)\nn\\
   & = \min_{\substack{q \in \mathcal{P(Y)}, i\in S}} D(q\|\mu_i) +D(q\|\pi).
\end{align}
Here, we can set $q_2 = \pi$, due to our relaxation to set $C_1'$. From Lemma \ref{lemma:Bhatta}, we know that the minimum is the Bhattacharyya distance between the distribution $\mu_i$ and $\pi$
\begin{equation}
   \alpha_1 < \min_{q \in \mathcal{P(Y)}, i\in S} D(q\|\mu_i) +D(q\|\pi) = \min_{ i\in S} B(\mu_i,\pi).
\end{equation}
Thus, as $M \to \infty$, we have
\begin{equation}
   \alpha\big(\delta^{(1)}_{2} \big) \le \alpha_1 < \lim_{M\to \infty} \min_{ i\in S} B(\mu_i,\pi).
\end{equation}

As for the time complexity, it is obvious that the initialization step in Algorithm \ref{al:kmeans1} can be executed within $O(M)$ time. The assignment step in Algorithm \ref{al:kmeans1}, which finds the largest $T$ elements from size $M$ array, can be solved in linear time $O(M)$ using the algorithm proposed in \cite{blum1973time}. Thus the overall time complexity is linear in $M$ and independent of $T$.

A comparison of Proposition \ref{prop:GLtest1} and Theorem \ref{thm:kmeans1} shows that $\delta^{(1)}_{2}$ has a smaller error exponent than that of the GL test in \eqref{eq:GLtest1} as $M\to \infty$, but has time complexity that is linear in $M$.

\section{Proof of Theorem \ref{thm:kmeans2}}\label{appx:thm3}
The exponential consistency of $\delta_{3}^{(\ell)}$ can be established using techniques similar to those in Theorem \ref{thm:kmeans1} and Theorem \ref{thm:moresteps}.
The major difference between the proof of Theorem \ref{thm:kmeans1} and Theorem \ref{thm:kmeans2} is that there are two cluster centers in the initialization step and assignment step in Algorithm \ref{al:kmeans2}.

We first establish the exponential consistency of the one-step test $\delta_{3}^{(1)}$.
Due to the structure of the test we know that errors may occur at two different steps:
\begin{enumerate}
  \item \textbf{Initialization Step:} The constructed cluster center for typical sequences $\hat{\pi}$ and outlying sequences $\hat{\mu}$ are actually generated from the same distribution.
  \item \textbf{Assignment Step:} The empirical distribution of an outlying sequence is closer to the cluster center of the typical sequence $\hat{\pi}$, and vice versa.
\end{enumerate}


We use $E$ to denote the event that errors occur in the initialization step. The error event $E$ can be decomposed into two parts, since $\gamma^{(0)}$ is chosen arbitrarily and can be generated from $\pi$ or $\mu$:
\begin{equation}
    E \triangleq E_1 \cup E_2,
\end{equation}
where
\begin{align}
  E_1 &\triangleq\Big\{\max _{j \in S^C} D(\gamma_j\| \gamma^{(0)}) > \max _{i \in S} D(\gamma_i\|\gamma^{(0)}),\ \gamma^{(0)}\ \mbox{generated from}\ \pi \Big\}, \\
  E_2 &\triangleq \Big\{\max _{i \in S} D(\gamma_i\| \gamma^{(0)}) > \max _{j \in S^C} D(\gamma_j\| \gamma^{(0)}),\ \gamma^{(0)}\ \mbox{generated from}\ \mu\Big\}.
\end{align}

Denote
\begin{align}
  A_i \triangleq \Big\{\max _{j_1 \in S^C} D(\gamma_{j_1}\|\gamma_{j_2} ) >  D(\gamma_i\| \gamma_{j_2}),\ \exists\ j_2 \in S^C \Big\},\ \forall\ i \in S, \\
  B_j \triangleq \Big\{\max _{i_i \in S} D(\gamma_{i_1}\| \gamma_{i_2}) >  D(\gamma_j\| \gamma_{i_2}),\ \exists\ i_2 \in S\Big\},\ \forall\ j \in S^C.
\end{align}

Since $\gamma^{(0)}$ is chosen arbitrarily, we have
\begin{align}
    E = \Big(\bigcap_{i \in S} A_i\Big) \bigcup \Big(\bigcap_{j \in S^C} B_j \Big).
\end{align}

We use $F$ to denote the event that errors occur at the assignment step, given that the clustering center $c^1$ and $c^2$ chosen by Algorithm \ref{al:kmeans2} are coming from different distributions. We further denote the cluster center which is actually generated from the typical (outlying) distribution by $\hat \pi$ ($\hat \mu$). Then $F$ can be written as
\begin{equation}
  F\triangleq F_1 \cup F_2,
\end{equation}
where
\begin{align}
  F_1 & \triangleq  E^C \bigcap\Big\{ D(\gamma_j\| \hat \pi)> D(\gamma_j\| \hat \mu), \exists\ j \in S^C \Big\},\\
  F_2 & \triangleq E^C \bigcap\Big\{D(\gamma_i\| \hat \mu) >  D(\gamma_i\| \hat \pi), \exists\ i\in S\Big\}.
\end{align}

Thus, we can upper bound the error probability of the one-step test $\delta^{(1)}_{3}$ by
\begin{equation}
  e\big(\delta^{(1)}_{3} \big)= \mathbb{P}_S\big(E \cup F \big) \le \mathbb{P}_S(E)+ \mathbb{P}_S(F).
\end{equation}

The first term on the right hand side can be bounded as,
\begin{align}
  \mathbb{P}_S(E) & \le \mathbb{P}_S\Big(\bigcap_{i \in S} A_i\Big)+ \mathbb{P}_S\Big(\bigcap_{j \in S^C} B_j\Big)\nn \\
    &\le \mathbb{P}_S( A_i) + \mathbb{P}_S( B_j)\nn \\
   & \overset{(a)}{\le} (M-|S|) \mathbb{P}_S \Big(\max _{j_1 \in S^C} D(\gamma_{j_1}\|\gamma_{j_2} ) >  D(\gamma_i\| \gamma_{j_2})\Big) \nn \\
   & \quad + |S| \mathbb{P}_S \Big(\max _{i_i \in S} D(\gamma_{i_1}\| \gamma_{i_2}) >  D(\gamma_j\| \gamma_{i_2})\Big)\nn \\
   & \overset{(b)}{\le} (M-|S|)^2 \mathbb{P}_S \Big(D(\gamma_{j_1} \| \gamma_{j_2}) > D(\gamma_i \| \gamma_{j_2} )\Big)\nn \\
   & \quad + |S|^2 \mathbb{P}_S \Big(D(\gamma_{i_1} \| \gamma_{i_2}) > D(\gamma_j \| \gamma_{i_2})\Big),
\end{align}
where the union bound (a) and (b) holds for all $j,j_1,j_2 \in S^C$ and $i,i_1,i_2 \in S$, since all typical distributions are identical and all outlying distributions are identical.

From Lemma \ref{lemma:sanov}, we obtain
\begin{align} \label{eq:exp_3}
  \alpha_3 &\triangleq \lim_{n \to \infty} -\frac{1}{n} \log \mathbb{P}_S \Big(D(\gamma_{j_1} \| \gamma_{j_2}) > D(\gamma_i \| \gamma_{j_2} )\Big)\nn \\ & =\min_{q_1,q_2,q_3 \in C_ 3} D(q_1\|\pi)+D(q_2\|\pi) +D(q_3\|\mu),
\end{align}
where $C_3 \triangleq \big\{(q_1,q_2,q_3): D(q_1\|q_2)>D(q_3\|q_2) \big\}$, and
\begin{align} \label{eq:exp_4}
  \alpha_4 & \triangleq \lim_{n \to \infty} -\frac{1}{n} \log \mathbb{P}_S \Big(D(\gamma_{i_1} \| \gamma_{i_2}) > D(\gamma_j \| \gamma_{i_2})\Big) \nn \\
  & =\min_{q_1,q_2,q_3 \in C_ 4} D(q_1\|\mu)+D(q_2\|\mu) +D(q_3\|\pi),
\end{align}
where $C_{4} \triangleq \big\{(q_1,q_2,q_3): D(q_1\|q_2)>D(q_3\|q_2) \big\}$.

We then upper bound $\mathbb{P}_S(F)$ by using the Union Bound \cite{durrett2010probability} as follows:
\begin{align}
  \mathbb{P}_S(F) & \le \mathbb{P}_S(F_1)+\mathbb{P}_S(F_2)\nn \\
   & \le \mathbb{P}_S\bigg( \bigcup_{j \in S^C }  \big\{D(\gamma_j\| \hat \pi)> D(\gamma_j\| \hat \mu) \big\} \bigg) \nn \\
   &\quad + \mathbb{P}_S\bigg( \bigcup_{i \in S } \big\{D(\gamma_i\| \hat \mu) >  D(\gamma_i\| \hat \pi) \big\}\bigg)\nn\\
   & \le |S|(M-|S|)^2 \mathbb{P}_S\Big( D(\gamma_{j_1}\| \gamma_{j_2})> D(\gamma_{j_1}\| \gamma_i)  \Big) \nn \\
   & \quad + |S|^2(M-|S|)\mathbb{P}_S\Big( D(\gamma_{i_1} \| \gamma_{i_2}) > D(\gamma_{i_1} \| \gamma_j)  \Big),
\end{align}
where $j,j_1,j_2 \in S^C$ and $i,i_1,i_2 \in S$. From Lemma \ref{lemma:sanov}, we obtain
\begin{align} \label{eq:exp_5}
  \alpha_5 & \triangleq \lim_{n \to \infty} -\frac{1}{n} \log \mathbb{P}_S\Big( D(\gamma_{j_1}\| \gamma_{j_2})> D(\gamma_{j_1}\| \gamma_i)  \Big) \nn \\
  & =\min_{q_1,q_2,q_3 \in C_ 5} D(q_1\|\pi)+D(q_2\|\pi) +D(q_3\|\mu),
\end{align}
where $C_5 \triangleq \big\{(q_1,q_2,q_3): D(q_1\|q_2)>D(q_1\|q_3) \big\}$, and
\begin{align} \label{eq:exp_6}
  \alpha_6 & \triangleq \lim_{n \to \infty} -\frac{1}{n} \log \mathbb{P}_S\Big( D(\gamma_{i_1} \| \gamma_{i_2}) > D(\gamma_{i_1} \| \gamma_j)  \Big)\nn \\
  & =\min_{q_1,q_2,q_3 \in C_ 6} D(q_1\|\mu)+D(q_2\|\mu) +D(q_3\|\pi),
\end{align}
where $C_{6} \triangleq \big\{(q_1,q_2,q_3): D(q_1\|q_2)>D(q_1\|q_3) \big\}$.

Due to the fact that the objective functions in $\eqref{eq:exp_3}$ and $\eqref{eq:exp_5}$ can
only be zero for the case $q_1 = q_2=\pi$, $q_3 =\mu$,
which is not in the constraint sets $C_3$ and $C_5$, respectively. The objective functions in $\eqref{eq:exp_4}$ and $\eqref{eq:exp_6}$ can only be zero when  $q_1=q_2=\mu$, $q_3=\pi$, which cannot meet the constraints in sets $C_4$ and $C_6$ either. Thus, we conclude that $\alpha_3,\ \alpha_4,\ \alpha_5,\  \alpha_6 >0$.

From the fact that $\lim_{n \to \infty} \frac{\log M(M-|S|)}{n} =0$, it then follows that
\begin{equation}
   \alpha\big(\delta_{3}^{(1)} \big) = \lim_{n \to \infty} -\frac{1}{n} \log e\big(\delta^{(1)}_{3} \big)  \ge \min \big\{\alpha_3,\ \alpha_4,\ \alpha_5,\  \alpha_6
   \big \}.
\end{equation}

From the above argument and Proposition \ref{prop:GLtest2}, both the one-step test $\delta_{3}^{(1)}$ and the GL test $\delta_{\mathrm{GL}}$ are exponentially consistent. Thus, based on the same technique used in the proof of Theorem \ref{thm:moresteps}, we establish the exponential consistency of the test $\delta_{3}^{(\ell)}$ proposed in Algorithm \ref{al:kmeans2}, for any $\ell\geq 1$.

Finally, since each iteration has the time complexity $O(M)$, $\delta^{(\ell)}_{3}$ which runs $\ell$ iterations has time complexity $O(M\ell)$.

\section{Proof of Theorem \ref{thm:clusters}}\label{appx:clusters}
The exponential consistency of $\delta_{3}^{(1)}$ for the scenario where the typical and outlying distributions form clusters can be established using the same techniques as in Theorem \ref{thm:kmeans2}. The major difference between the proofs of Theorem \ref{thm:kmeans2} and Theorem \ref{thm:clusters} is that here both the typical distributions and the outlying distributions are distinct.



Using the same events defined in Appendix \ref{appx:thm3}, the error probability of the one-step test $\delta^{(1)}_{3}$ can be upper bounded by
\begin{equation}
  e\big(\delta^{(1)}_{3} \big)= \mathbb{P}_S\big(E \cup F \big) \le \mathbb{P}_S(E)+ \mathbb{P}_S(F).
\end{equation}
The first term on the right hand side can be bounded as,
\begin{align}
  \mathbb{P}_S(E) & \le \mathbb{P}_S\Big(\bigcap_{i \in S} A_i\Big)+ \mathbb{P}_S\Big(\bigcap_{j \in S^C} B_j\Big)\nn \\
    &\le \mathbb{P}_S( A_i) + \mathbb{P}_S( B_j)\nn \\
   & \le (M-|S|)^2 \max_{\substack{ j_1,j_2\in S^C \\ i \in S}}\ \mathbb{P}_S \Big(D(\gamma_{j_1} \| \gamma_{j_2}) > D(\gamma_i \| \gamma_{j_2} )\Big) \nn \\
   & \quad + |S|^2 \max_{\substack{ i_1,i_2 \in S \\ j\in S^C}}\ \mathbb{P}_S \Big(D(\gamma_{i_1} \| \gamma_{i_2}) > D(\gamma_j \| \gamma_{i_2})\Big).
\end{align}

From Lemma \ref{lemma:sanov}, we obtain
\begin{align} \label{eq:exp_7}
   \alpha_7&\triangleq \lim_{n \to \infty} -\frac{1}{n} \log \max_{\substack{j_1,j_2\in S^C \\ i \in S}} \mathbb{P}_S \Big(D(\gamma_{j_1} \| \gamma_{j_2}) > D(\gamma_i \| \gamma_{j_2} )\Big) \nn \\
  & =\min_{\substack{j_1,j_2\in S^C\\ i \in S}}\ \min_{q_1,q_2,q_3 \in C_7} D(q_1\|\pi_{j_1})+D(q_2\|\pi_{j_2}) +D(q_3\|\mu_i),
\end{align}
where $C_7 \triangleq \big\{(q_1,q_2,q_3): D(q_1\|q_2)>D(q_3\|q_2) \big\}$, and
\begin{align} \label{eq:exp_8}
   \alpha_8 & \triangleq \lim_{n \to \infty} -\frac{1}{n} \log \max_{\substack{i_1,i_2 \in S\\ j\in S^C}} \mathbb{P}_S \Big(D(\gamma_{i_1} \| \gamma_{i_2}) > D(\gamma_j \| \gamma_{i_2})\Big) \nn \\
  & =  \min_{\substack{i_1,i_2 \in S \\ j\in S^C}}\ \min_{q_1,q_2,q_3 \in C_8} D(q_1\|\mu_{i_1})+D(q_2\|\mu_{i_2}) +D(q_3\|\pi_j),
\end{align}
where $C_{8} \triangleq \big\{(q_1,q_2,q_3): D(q_1\|q_2)>D(q_3\|q_2) \big\}$.

We then upper bound $\mathbb{P}_S(F)$ by using the Union Bound \cite{durrett2010probability} as follows,
\begin{align}
  \mathbb{P}_S(F) & \le \mathbb{P}_S(F_1)+\mathbb{P}_S(F_2)\nn \\
   & \le \mathbb{P}_S\bigg( \bigcup_{j \in S^C }  \big\{D(\gamma_j\| \hat \pi) > D(\gamma_j\| \hat \mu) \big\} \bigg)
   + \mathbb{P}_S\bigg( \bigcup_{i \in S } \big\{D(\gamma_i\| \hat \mu)  > D(\gamma_i\| \hat \pi) \big\}\bigg)\nn\\
   & \le |S|(M-|S|)^2 \max_{\substack{j_1,j_2\in S^C \\ i \in S}} \mathbb{P}_S\Big( D(\gamma_{j_1}\| \gamma_{j_2}) > D(\gamma_{j_1}\| \gamma_i) \Big)\nn \\
   &\quad+ |S|^2(M-|S|)\max_{\substack{i_1,i_2 \in S \\ j\in S^C}} \mathbb{P}_S\Big( D(\gamma_{i_1} \| \gamma_{i_2})  >D(\gamma_{i_1} \| \gamma_j) \Big).
\end{align}

From Lemma \ref{lemma:sanov}, we obtain
\begin{align} \label{eq:exp_9}
   \alpha_9 & \triangleq \lim_{n \to \infty} -\frac{1}{n} \log \max_{\substack{j_1,j_2\in S^C \\i \in S}} \mathbb{P}_S\Big( D(\gamma_{j_1}\| \gamma_{j_2}) > D(\gamma_{j_1}\| \gamma_i) \Big) \nn \\
  &= \min_{\substack{j_1,j_2\in S^C \\ i \in S}}\ \min_{q_1,q_2,q_3 \in C_ 9} D(q_1\|\pi_{j_1})+D(q_2\|\pi_{j_2}) +D(q_3\|\mu_i),
\end{align}
where $C_9 \triangleq \big\{ (q_1,q_2,q_3): D(q_1\|q_2)>D(q_1\|q_3) \big\}$, and
\begin{align} \label{eq:exp_10}
   \alpha_{10} &\triangleq \lim_{n \to \infty} -\frac{1}{n} \log \max_{\substack{i_1,i_2 \in S\\ j\in S^C}} \mathbb{P}_S\Big( D(\gamma_{i_1} \| \gamma_{i_2}) >D(\gamma_{i_1} \| \gamma_j) \Big) \nn \\
 &=\min_{\substack{i_1,i_2 \in S\\ j\in S^C}}\ \min_{q_1,q_2,q_3 \in C_{10}} D(q_1\|\mu_{i_1})+D(q_2\|\mu_{i_2}) +D(q_3\|\pi_j),
\end{align}
where $C_{10} \triangleq \big\{(q_1,q_2,q_3): D(q_1\|q_2)>D(q_1\|q_3) \big\}$.

Note that the objective functions in $\eqref{eq:exp_7}$ and $\eqref{eq:exp_9}$ can
only be zero for the case $q_1=\pi_{j_1}, q_2=\pi_{j_2}$, $q_3 =\mu_i$,
which is not in the constraint sets $C_7$ and $C_9$, due to our clustering assumption \eqref{eq:clustercondition}. The objective functions in $\eqref{eq:exp_8}$ and $\eqref{eq:exp_10}$ can only be zero when  $q_1=\mu_{i_1},q_2=\mu_{i_2}$, $q_3=\pi_j$, which cannot meet the constraints in sets $C_8$ and $C_{10}$ either. Thus, we conclude that $\alpha_7,\ \alpha_8,\ \alpha_9,\  \alpha_{10} >0$.

From the fact that $\lim_{n \to \infty} \frac{\log M(M-|S|)}{n} =0$, it follows that
\begin{equation}
   \alpha\big(\delta_{3}^{(1)}\big) = \lim_{n \to \infty} -\frac{1}{n} \log e\big(\delta^{(1)}_{3}\big)  \ge \min \big\{\alpha_7,\ \alpha_8,\ \alpha_9,\  \alpha_{10}\big\}.
\end{equation}

\bibliographystyle{IEEEbib}
\bibliography{Polynomial_test}

\end{document}